\newtheorem{theorem}{Theorem}[section]
\newtheorem{thm}{Theorem}[section]
\newtheorem{cor}[thm]{Corollary}
\newtheorem{defn}[thm]{Definition}
\newtheorem{remark}{Remark}[section]
\newtheorem{lemma}{Lemma}[section]
\newtheorem{rem}[thm]{Remark}}
\def\e{\text{\rm{e}}}
\newcommand{\fin}{\hspace*{\fill}\rule{0.3em}{1ex}}
\newenvironment{proof}{{\bf \noindent Proof.}}{\fin}
\numberwithin{equation}{section}
\begin{document}

\title{ Foreign exchange market modelling and an on-line portfolio selection algorithm}

\author{Panpan Ren and Jiang-Lun Wu\footnote{Corresponding author.} \\[0.2cm] 
{\small Department of Mathematics, Swansea University, Swansea, UK}\\
{\small  Email: 673788@swansea.ac.uk; j.l.wu@swansea.ac.uk }}
\maketitle

\begin{abstract}
In this paper, we introduce a matrix-valued time series model for foreign exchange market. We then formulate trading matrices, foreign exchange 
options and return options (matrices), as well as on-line portfolio strategies. Moreover, we attempt to predict returns of portfolios by developing 
a cross rate method. This leads us to construct an on-line portfolio selection algorithm for this model. At the end, we prove the profitability and 
the universality of our algorithm. 
\end{abstract}

\noindent \textbf{Keywords}: Foreign exchange market modelling, on-line portfolio, optimisation, cross rate, currency exchange market, 
matrix algorithm.

\section{Introduction}

The problem of constructing on-line portfolio selection scheme for time series financial models (with or without transaction costs) has been discussed in 
the last two decades, see e.g. \cite{Cover,Duffie,CoverOrd,cox0,ALZ2001} (and references therein).  The main effort of this kind study is to develop a universal 
portfolio to optimise the on-line portfolio management asymptotically, which can be traced back to the celebrated problem of Merton (\cite{Merton}, see 
also \cite{Davis,ShreveSoner}).    

Analysing currency trading, or to be more precise, modelling, predicting and hedging foreign currency exchange rates are important problems, giving that 
modern communication technology nowadays brings unified (global) financial market setting worldwide, making the currency exchange markets more and 
more complicated on one side and highly demanding deep mathematical analysis and high computing in the micro level model on the other side. This then 
challenges theoretical considerations as well as computing technology profoundly. The latter is linked to deep machine learning and data analysis. 

In this paper, we use matrix-valued time series to model foreign exchange markets. We aim to establish an on-line  (universal) portfolio selection and to 
analyse the possible optimal algorithm. In our forthcoming work, we plan to exam our algorithm with the help of deep machine learning skills.      

We start with the mathematical settings of the foreign exchange markets.  We use a square matrix-valued time series to represent the daily pair 
wise currency exchange rates. Each matrix for a fixed date is interpreted as follows. We list all currencies in a row and in a column with the same order,  the 
diagonal entries are just each currency against to itself, the upper triangular part stands for the investors to buy any selected currency against other currencies  
and the low triangular part denotes the investors to sell any selected currency against other currencies.  From this set-up, one can formulate trading matrices 
for adjacent days accordingly. We then define foreign exchange options and return matrices (i.e., return options).  Based on these, on-line portfolio strategies  
and transaction costs can be  implemented. 

To avoid high transaction costs incurred during the trading, one has to optimise the associated portfolios. To this end, we introduce a {\it distance} for 
matrix-valued on-line portfolio strategies. and we further formulate properly the returns of portfolios based on return matrices. On the other hand, in order 
to predict the returns of portfolios, we define an {\it order} for return matrices from which we develop a {\it cross rate method}. Finally, we are able to establish 
our cross rate schemes and show the universality of our algorithm. The feature of our consideration is that investors can measure their on-line portfolios by 
applying universality of the two specific update rules with the cross rate method. It allows investors obtain more than half probability chance to get more profit 
for their portfolios. Our work is inspired by \cite{Davis,helmbold,ALZ2001}, however, we are dealing with matrix-valued time series for the foreign exchange markets 
while those papers only treated vector-valued time series with only considering two scaler states for a complete comparison.    

The paper is organized as follows. In the next section, we set up our mathematical framework for the foreign exchange markets. In Section 3, we give a 
full analysis of update rules for on-line portfolio selections. Section 4 is devoted to developing the cross rate approach for the prediction of the returns. 
We present and prove our main results on the profitability and the universality in Section 5. At the last section,  Section 6, we draw our conclusions.        

\section{Preliminaries and mathematical framework}

\subsection{The foreign exchange markets}

To begin with, let us recall some basic background on the foreign exchange markets.There are 164 circulating official currencies around the world. In the foreign 
exchange markets, like XE, there are 39 different currencies listed, which can be traded. Each tradable currency has their own price against to another 
currency, which is called the currency exchange rate. The foreign exchange markets are international decentralised financial markets for trading currencies, so the 
participants can buy, sell and exchange currencies at spot or determined currency exchange rates. In the foreign exchange markets, currencies are regarded as the 
underlying assets. Thus, the price of the underlying asset is the currency exchange rates. As currencies are always traded in pairs, currency exchange rates are 
written as  each termed currency against to the based currency.  For instance, \pounds 0.7/\textdollar1 means that 1 US Dollar (for short USD) can be exchangeable 
to 0.7 British Pound (for short GBP), where ``\textdollar" is the base currency and $\pounds $ is the quote currency (counter currency). So we say the currency exchange 
rate for GBP against to USD is $ 0.7$. Conversely, the currency exchange rate for USD against to the GBP is 1\textdollar/$\pounds 0.7$=1.429. On the basis of the 
currency exchange rates above, the investor, who holds $\pounds 100$ and  \textdollar 100 at the same  time, can sell $\pounds 100$  to get \textdollar142.9, while  
sell  \textdollar100 to get $\pounds 70$. As a matter of fact, the relationship between currency exchange rates for buying and selling a currency is inverse proportion. 

Let us take the pair of GBP and USD as an example to explicate  the  mechanism of the foreign exchange markets.
\begin{figure}[htb]
	\begin{minipage}[b]{1.0\linewidth}
		\centering
		\centerline{\includegraphics[width=1.0\linewidth]{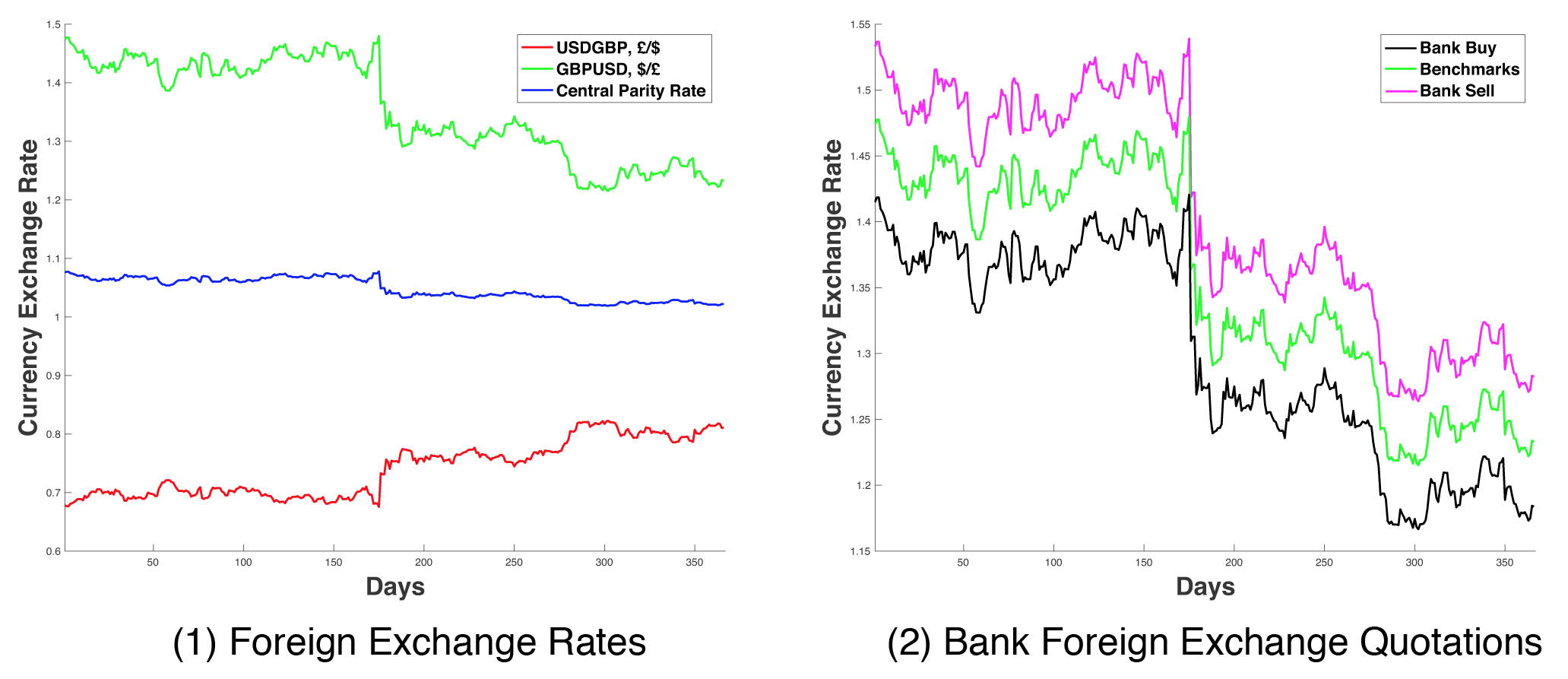}}
	\end{minipage}
	\caption{Currecy Exchange Rates}
	\label{WOZUIMEI}
\end{figure}

 The diagrams below show the currency exchange rates for GBPs against to USDs in 2016. We suppose that the bank is based on UK. In what follows, we shall go into 
 details Figure (1) and Figure (2), one-by-one. 

With regard to the  Figure (1),
\begin{itemize}
	\item The horizontal axis is the time parameter from   day 1 to  day 365
and 	the vertical axis stands for the currency exchange rates for GBPs and USDs;
	
	\item The green line represents the currency exchange rates for GBPs against to USDs, which is the selling price of the USDs for the bank in UK;
	
	\item The red line signifies the currency exchange rates for USDs against to GBPs, which is the buying price of the USDs for the bank in UK;
	
	\item The blue line means the central parity  rates for USDs against to GBPs, which is equal to half of  buying price (see green line) plus selling price (see red line).
	\end{itemize}
		
	As the volatility of currency exchange rates,   banks  also take  risks for the price volatility when they sell and buy the currencies. So,  each bank adopts different currency exchange rates according to the benchmark rates. Meanwhile,  the trader will be charged  for each currency transaction.
	
	Concerning the Figure (2), 
	\begin{itemize}	
	\item  The green line is taken from the Figure (1), which is the benchmark rate  for the bank. According to  the mechanism of foreign exchange markets, banks adjust their own currency exchange rates to hedge the risk from price volatilities. On the other hand, the gap between the pink  and the black line is the profit that   the bank   obtains.
	
	\item  The pink line is the bank quotation of selling foreign currencies (i.e., USDs), which is greater than the benchmark rate at the same moment. 
	
	\item  The black line stands for the bank quotation of buying  foreign currencies (i.e., USDs), which is less than    the benchmark rate at the same moment.	
\end{itemize}

As far as the  investors are concerned,   it is sufficient to look at the pink line (in the Figure (2))  when they buy foreign currencies and watch the black line in the case that they sell foreign currencies. Thus, if an  investor buys foreign currencies at one moment, then the profit for the investor depends heavily on the bank quotation of selling foreign currencies exchange rates (see the pink line in the Figure (2)) when the investor sells the currencies bought at the next moment. 

As we know, there are numerous factors which have impacts on the currency exchange rates (see e.g. \cite{PanpanRenWu} and references therein). Nevertheless, in the present paper, we will not go into 
details the corresponding influential factors.  In this work, we are interested in  the prediction of the returns and  whether there is an algorithm to optimise the currency portfolios.

\subsection{Our foreign exchange market setup}
Before we present the set up of the foreign exchange markets, let us introduce some notation  and terminology. Let $N>1$ be a fixed integer, which will be called the period of investment in the foreign exchange markets. Throughout the paper, we assume that there are $m~ (>1)$ different currencies which are tradable in the foreign exchange markets. Set the (ordered) list of currencies by $\Lambda_0:=\{1,...,m\}$ and 
the trading dates by $\Lambda_1:=\{1,...,N\}$.

 \begin{itemize}
    \item For  each pair  $(i,j)\in\Lambda_0\times\Lambda_0$ with $i<j$,   we use $\bar s_{ij}^{(k)} $  to denote the currency exchange rate for the investor to buy the  foreign currency $j$ via the (home) currency $i$ at the date $k\in\Lambda_1$; and  
    for  $i>j$, $\underline s_{ij}^{(k)}$ stands for the currency exchange rate for the investor to sell the foreign currency $j$ to get the home currency $i$ at the date $k\in\Lambda_1$. 
    
 \item  Note that  there is a corresponding currency exchange rate for each pair of the  $m$ different currencies (indeed, the total number of  pairs are $m^2$ with 
 $m(m-1)$ distinct pairings). Thus,  the  currency exchange rates at the  date $k$ and the adjacent date  $k+1$ can be formed by two square matrices, denoted by $S^{(k)}$ and $S^{(k+1)}$, respectively,  
 as in the following manner 
\begin{equation}\label{SK}
S^{(k)}= \left(\begin{array}{ccccc}
s_{11}^{(k)} & \bar s_{12}^{(k)}&\cdots&\bar s_{1m}^{(k)}\\
\underline s_{21}^{(k)} & s_{22}^{(k)}&\cdots&\bar s_{2m}^{(k)}\\
\vdots & \vdots &\ddots&\vdots\\
\underline s_{m1}^{(k)} &\underline s_{m2}^{(k)}&\cdots&s_{mm}^{(k)}\\
\end{array}
\right) 
\end{equation}
and 
\begin{equation}\label{SK+1}
S^{(k+1)}= \left(\begin{array}{ccccc}
s_{11}^{(k+1)} & \bar s_{12}^{(k+1)}&\cdots&\bar s_{1m}^{(k+1)}\\
\underline s_{21}^{(k+1)} & s_{22}^{(k+1)}&\cdots&\bar s_{2m}^{(k+1)}\\
\vdots & \vdots &\ddots&\vdots\\
\underline s_{m1}^{(k+1)} &\underline s_{m2}^{(k+1)}&\cdots&s_{mm}^{(k+1)}\\
\end{array}
\right).
\end{equation}

\item For a square matrix, the entries above the main diagonal are called the upper triangular part and the entries below the main diagonal are named as the lower triangular part. The upper triangular parts  and the lower triangular parts of $S^{(k)}$ and $S^{(k+1)}$, defined in \eqref{SK}  and \eqref{SK+1}, respectively, are the bank quotations of selling and  buying foreign currencies, respectively. 

\item For $j=i$, $s_{ij}^{(.)}$ means the currency exchange rate of the currency $i$ against to itself at the day $k\in\Lambda_1.$ For simplicity, we set $s_{ij}^{(.)}=1$. 

\item From the point of view for the bank, it is plausible to assume $\bar s_{ij}^{(.)}> \underline s_{ij}^{(.)}> 0$. So, there is a gap between $\bar s_{ij}^{(.)}$ and $ \underline s_{ij}^{(.)}$ so that  there exists $\varepsilon>0$ satisfying $\bar s_{ij}^{(.)}- \underline s_{ij}^{(.)}=2\varepsilon$. Observe that   the parameter $\varepsilon$ will be changed according to  the transaction amount in the spot market.
    
 \item According to the foreign exchange market mechanism (showed in Figure (2)), when trading currencies,  investors buy (or sell) foreign currencies at the date $k$ and then  sell (or buy) foreign currencies at the day $k+1$. Then we have following two matrices for investors to trade currencies at different time.
	\begin{equation*} 
	S_{(k+1)}^{(k)}= \left(\begin{array}{ccccc}
	s_{11}^{(k)} & \bar s_{12}^{(k)}&\cdots&\bar s_{1m}^{(k)}\\
	\underline s_{21}^{(k+1)} & s_{22}^{(k)}&\cdots&\bar s_{2m}^{(k)}\\
	\vdots & \vdots &\ddots&\vdots\\
	\underline s_{m1}^{(k+1)} &\underline s_{m2}^{(k+1)}&\cdots&s_{mm}^{(k)}\\
	\end{array}
	\right),
	\end{equation*}
	and 
	\begin{equation*} 
	S_{(k)}^{(k+1)}= \left(\begin{array}{ccccc}
	s_{11}^{(k+1)} & \bar s_{12}^{(k+1)}&\cdots&\bar s_{1m}^{(k+1)}\\
	\underline s_{21}^{(k)} & s_{22}^{(k)}&\cdots&\bar s_{2m}^{(k+1)}\\
	\vdots & \vdots &\ddots&\vdots\\
	\underline s_{m1}^{(k)} &\underline s_{m2}^{(k)}&\cdots&s_{mm}^{(k)}\\
	\end{array}
	\right),
	\end{equation*}
where the lower triangular parts of  $S_{(k+1)}^{(k)}$  (resp. $S_{(k)}^{(k+1)} $) are the currency exchange rates for investors to sell (resp. buy ) foreign currencies at the date $k$ (resp. the date $k+1$), and then the upper triangular parts of  $S_{(k+1)}^{(k)}$  (resp. $S_{(k)}^{(k+1)} $) are the currency exchange rates for investors to buy foreign currencies at the date $k+1$ (resp. the date $k$).
	
	\item Assume that an investor holds the home currency $i$, where the total value is $X_{i}$. At the day $k$, the investor buys the foreign currency $j$, whose value is equal to $X_{i}\bar s_{ij}^{(k)}$, via the home currency $i$, then we have the value of foreign currency is $X_{j}(=X_{i}\bar s_{ij}^{(k)})$. At the day $k+1$, the investor sells the foreign currency $X_{j}$ to get the home currency $\tilde X_{i}$.
	
	\item Next, let us define two foreign exchange options $\hat{\bar s}_{ij}^{(k+1)}$ and $\hat{\underline s}_{ij}^{(k+1)}$  for the currency exchange rates in the date $k+1$, respectively,  by 
	
\begin{equation}\label{ce1}
\hat{\bar s}_{ij}^{(k+1)}=
\begin{cases}
\underline s_{ij}^{(k+1)} , ~~~~~~~~\underline s_{ij}^{(k+1)}>\bar s_{ij}^{(k)}\\
0 , ~~~~~~~~~~~~~~~~~~\mbox{others}
\end{cases}
\end{equation}

\begin{equation}\label{ce2}
\hat{\underline s}_{ij}^{(k+1)}=
\begin{cases}
\underline s_{ij}^{(k+1)} , ~~~~~~~~\bar s_{ij}^{(k+1)}>\underline s_{ij}^{(k)}\\
0 , ~~~~~~~~~~~~~~~~~~\mbox{others}
\end{cases}
\end{equation}
That is, there is no investment for investors if it is not profitable.

	\item For ${(i,j)}^{th}$ currency pair in the market at the date $k$, the return at the  date $k$ and the date $k+1$ can be expressed respectively as $\bar r_{ij}^{(k)}$, $\underline r_{ij}^{(k)}$ and 
	$\bar r_{ij}^{(k+1)}$, $\underline r_{ij}^{(k+1)}$. Let $\bar s_{ij}^{(ko)}$ and $\underline s_{ij}^{(ko)}$ be the opening exchange rates at the date $k$, and denote  the closing exchange rates 
	of ${(i,j)}^{th}$ currency pair by $\bar s_{ij}^{(kc)}$ and $\underline s_{ij}^{(kc)}$, then we have the following 
	\begin{equation}\label{r1}
     \bar r_{ij}^{(k)}=
     \begin{cases}
\frac{\bar s_{ij}^{(ko)}}{\underline s_{ij}^{(kc)}} , ~~~~~~~~\bar s_{ij}^{(ko)}>\underline s_{ij}^{(kc)}\\
0 , ~~~~~~~~~~~~~~~~~~\mbox{others}; 
\end{cases}  \quad      
     \bar r_{ij}^{(k+1)}=
     \begin{cases}
\frac{\bar s_{ij}^{(ko)}}{\underline s_{ij}^{(k+1)c}} , ~~~~~~~~\bar s_{ij}^{(ko)}>\underline s_{ij}^{(k+1)c}\\
0 , ~~~~~~~~~~~~~~~~~~\mbox{others}
\end{cases} 
    \end{equation}
    and
	\begin{equation}\label{r2}
     \underline r_{ij}^{(k)}=
      \begin{cases}
\frac{\underline s_{ij}^{(ko)}}{\bar s_{ij}^{(kc)}} , ~~~~~~~~\underline s_{ij}^{(ko)}>\bar s_{ij}^{(kc)}\,\\ 
0 , ~~~~~~~~~~~~~~~~~~\mbox{others};
\end{cases} \quad      
     \underline r_{ij}^{(k+1)}=
     \begin{cases}
\frac{\underline s_{ij}^{(ko)}}{\bar s_{ij}^{(k+1)c}} , ~~~~~~~~\underline s_{ij}^{(ko)}>\bar s_{ij}^{(k+1)c}\\
0 , ~~~~~~~~~~~~~~~~~~\mbox{others}. 
\end{cases}     
    \end{equation}
    
	\item The return matrices $R^{(k)}$ and $R^{(k+1)}$ of the $m$ currency exchange transactions at the  trading dates $k$ and $k+1$ are defined respectively by 
	\begin{equation}\label{RK}
	R^{(k)}=  \left(\begin{array}{ccccc}
	r_{11}^{(k)} & \bar r_{12}^{(k)}&\cdots&\bar r_{1m}^{(k)}\\
	\underline r_{21}^{(k)} & r_{22}^{(k)}&\cdots&\bar r_{2m}^{(k)}\\
	\vdots & \vdots &\ddots&\vdots\\
	\underline r_{m1}^{(k)} &\underline r_{m2}^{(k)}&\cdots&r_{mm}^{(k)}\\
	\end{array}
	\right) 
	\end{equation}
	and 
	\begin{equation}\label{RK+1}
	R^{(k+1)}= \left(\begin{array}{ccccc} 
	r_{11}^{(k)} & \bar r_{12}^{(k+1)}&\cdots&\bar r_{1m}^{(k+1)}\\
	\underline r_{21}^{(k+1)} & r_{22}^{(k)}&\cdots&\bar r_{2m}^{(k+1)}\\
	\vdots & \vdots &\ddots&\vdots\\
	\underline r_{m1}^{(k+1)} &\underline r_{m2}^{(k+1)}&\cdots&r_{mm}^{(k+1)}\\
	\end{array}
	\right).
	\end{equation}
Note that $\bar r_{ij}^{(.)} \neq0$ whenever $\underline r_{ji}^{(.)} =0$ and reciprocally $\underline r_{ij}^{(.)} \neq0$ whenever $\bar r_{ji}^{(.)} =0$.  Moreover, 
	 $r_{ji}^{(.)}=0$ for $i=j$.  In literature, the return matrix $R^{(k)}$ is also referred to as a price relative, see, for instance,    \cite{ALZ2001}.
	
	\item For each pair $(i,j)$ of the $m$ different currencies,  let $\psi_{ij}^{(k)}$ denote the proportion of the whole capital the investor holds at the  date $k$ to buy the foreign 
	currency $j$ via the home currency $i$. Then, the  portfolio matrix $\psi^{(k)}$ for the $m$ different currencies at the  date $k$ can be formulated as 
	\begin{equation}\label{P1}
	\psi^{(k)}= \left(\begin{array}{ccccc}
	\psi_{11}^{(k)} & \psi_{12}^{(k)}&\cdots&\psi_{1m}^{(k)}\\
	\psi_{21}^{(k)} & \psi_{22}^{(k)}&\cdots&\psi_{2m}^{(k)}\\
	\vdots & \vdots&\ddots&\vdots\\
	\psi_{m1}^{(k)} & \psi_{m2}^{(k)}& \cdots&\psi_{mm}^{(k)}\\
	\end{array}
	\right).
	\end{equation}
Note that, for $i,j\in\Lambda_0$,  $\psi_{ij}^{(k)}\ge0,  \psi_{ii}^{(k)}=0, \sum_{i,j=1}^m\psi_{ij}^{(k)}=1$. In what follows,  let $P^{(k)}$ be the collection of all portfolio matrices 
for the $m$ different currencies   at the date $k$, i.e., 
	\begin{equation*} 
	P^{(k)}:=\Big\{\psi^{(k)}\in \mathbb{R}^m\otimes \mathbb{R}^m: \sum_{i,j\in\Lambda_0}\psi_{ij}^{(k)}=1,\psi_{ij}^{(k)} > 0, \psi_{ii}^{(k)}=0, i,j\in\Lambda_0\Big\},
	\end{equation*}
where $\mathbb{R}^m\otimes \mathbb{R}^m$	stands for the totality of $m\times m$-matrices. 

	\item For  $A=(a_{ij})_{m\times m}\, , B=(b_{ij})_{m\times m}\in \mathbb{R}^m\otimes \mathbb{R}^m$,  we define a product by the following  
	
	\begin{equation}\label{lv}
	A\boxtimes B:=(c_{ij})_{m\times m}\in \mathbb{R}^m\otimes \mathbb{R}^m, 
	\end{equation}
	in which $c_{ij}:=a_{ij}b_{ij}.$

	\item The return matrix $\psi^{(k)}\boxtimes R^{(k)}$ of the portfolio matrix $\psi^{(k)}$  is given by 
	\begin{equation*}
	\psi^{(k)}\boxtimes R^{(k)}=\left(\begin{array}{ccccc}
	0 & \psi_{12}^{(k)}\bar r_{12}^{(k)}&\cdots&\psi_{1m}^{(k)} \bar r_{1m}^{(k)}\\
	\psi_{21}^{(k)}\underline r_{21}^{(k)} & 0&\cdots&\psi_{2m}^{(k)}\bar r_{2m}^{(k)}\\
	\vdots & \vdots&\ddots&\vdots\\
	\psi_{m1}^{(k)}\underline r_{m1}^{(k)} & \psi_{m2}^{(k)}\underline r_{m2}^{(k)}&\cdots&0\\
	\end{array}
	\right)
	\end{equation*}
	according to \eqref{lv}.

	\item Let $F_k$ be the capital the investor holds at the end of the date $k$ and $T_k$ the transaction costs charged at the day $k$. We assume that the capital at the end of the  date $k$ is same as the begginning of the day $k+1$. Let  $F'_k$ be  the total funds at the date $k$ whenever the transaction costs are taken into consideration.  Obviously, 
	\begin{equation}\label{Bulage}
	F'_k=F_{k-1}-T_k,  \quad k\geqslant 1.\end{equation}
\item Set
\begin{equation*}
\psi^{(k)}\diamond R^{(k)}:=\left(\begin{array}{ccccc}
	1 & 1&\cdots&1\\

	\end{array}
	\right)_{(1\times m)}\Big(\psi^{(k)}\boxtimes R^{(k)}\Big)\left(\begin{array}{c}
	
	1\\
	1\\
	\vdots\\
	1\\
	\end{array}
	\right)_{(m\times 1)}.
\end{equation*}
Thus, it is obvious to see that the scalar value 
\begin{equation}\label{Qingmai}
\psi^{(k)}\diamond R^{(k)}=\sum_{i,j=1}^{m}\Big( \psi_{ij}^{(k)}\Big(\bar r^{(k)}_{ij}+\underline r^{(k)}_{ij}\Big)\Big)
\end{equation}
and, by taking the transaction costs into account, we have 
\begin{equation}\label{Fishman}
F_k=F'_k( \psi^{(k)}\diamond R^{(k)}).
\end{equation}
\end{itemize}

\subsection{The on-line portfolio strategy}
 As the high volatility of the currency exchange rates,  the investor prefers to  adjust  the investment strategy  frequently. 
 It is worthwhile pointing out that one of the  effective strategy to optimize the portfolio matrix is the on-line portfolio selection 
 by taking advantage of the historical data. More precisely,  
\begin{itemize}
\item The on-line portfolio strategy is given by 
\begin{equation*}
\psi^{(k+1)}=f \Big(\{\psi^{(1)},...,\psi^{(k)}\},\{R^{(1)},..., R^{(k)}\}\Big),
\end{equation*}
where $f:( \mathbb{R}^m\otimes \mathbb{R} ^m)^{2k}\rightarrow P^{(k+1)}$ with 
$$( \mathbb{R}^m\otimes \mathbb{R} ^m)^{2k}:=\underbrace{ \mathbb{R}^m\otimes \mathbb{R} ^m \cdots\times  \mathbb{R}^m\otimes \mathbb{R} ^m}_{2k}.$$
It is clear to see that the  on-line portfolio strategy $\psi^{(k+1)}$ depends on the portfolio matrices and the return matrices before the $(k+1)$-th trading date. 

\item Without taking  the transaction costs into consideration, the total capital   during the course of the investment period  (i.e., $N$ days) will increase  to 
\begin{equation}\label{Yulin}
I_N\Big(\{\psi^{(\cdot)}\},\{R^{(\cdot)}\}\Big):=\prod_{k=1}^N\bigg\{ \psi^{(k)}\diamond R^{(k)}\bigg\} ,
\end{equation}
where $\psi^{(k)}\diamond R^{(k)}$ is given in \eqref{Qingmai}. Herein, $I_N\Big(\{\psi^{(\cdot)}\},\{R^{(\cdot)}\}\Big)$  is referred to as the final return  
without the transaction costs. With \eqref{Yulin} in hand,  the exponential growth rate of the funds without  the transaction costs is represented by 
\begin{equation}\label{b-p}
LI_N(\{\psi^{(\cdot)}\},\{R^{(\cdot)}\}) := \frac{1}{N}\sum_{k=1}^{N}\log \bigg( \psi^{(k)}\diamond R^{(k)}\bigg).
\end{equation}

\item Let $(c_k)_{k\in\Lambda_1}$ with $c_1=0$ be the ratio of the transaction cost at the  date $k$ (i.e., $T_k$) to the fund holding at the  beginning of the date $k$ (i.e., $F_{k-1}$), in other words, 
\begin{equation}\label{Hefei}
c_1=0; \qquad
c_k=\frac{T_k}{F_{k-1}}=\frac{F_{k-1}-F'_k}{F_{k-1}},~~~~~k\ge2,
\end{equation}
where the second identity for $c_k$ is due to \eqref{Bulage} and $F_k>0$ for $k\ge1.$

\item In the presence of the transaction costs,  the return of the total investment during the investment period  will reach to 
\begin{equation}\label{Huashan}
F_N(\{\psi^{(\cdot)}\},\{R^{(\cdot)}\}) = \prod_{k=1}^N\Big(\psi^{(k)}\diamond R^{(k)}\Big)(1-c_k),
\end{equation}
where $c_k$ is defined as in \eqref{Hefei}.  In \eqref{Huashan},  $F_N(\{\psi^{(\cdot)}\},\{R^{(\cdot)}\})$ is named as the return with the transaction costs. With the \eqref{Huashan},  we formulate the 
exponential growth rate of the funds with transaction costs as 
\begin{equation}\label{P-B}
R_N(\{\psi^{(\cdot)}\},\{R^{(\cdot)}\}) := \frac{1}{N}\sum_{k=1}^{N}\log \bigg( \psi^{(k)}\diamond R^{(k)}\bigg)+\frac{1}{N}\sum_{k=1}^{N}\log (1-c_k).
\end{equation}
\end{itemize}

\subsection{Transaction costs}

According to the self-financing strategy,  the investor will reinvest all of the funds held at the end of the trading date $k-1, \mbox{for }  k\geq 2$ to the beginning of the next trading date $k$. In our 
present case, we shall consider the transaction costs at the end of the trading date. Hence, the total funds will be reduced.  For more details, the reader is referred to \eqref{Bulage}.   In the light of \eqref{Fishman}, the investment on the currency pair $(i,j) $  admits the form 
\begin{equation}\label{Swansea}
X^{k}_{ij}=F'_k \psi^{(k)}_{ij}\Big(\bar r^{(k)}_{ij}+\underline r^{(k)}_{ij}\Big),\end{equation}
where $F'_k$ is introduced in \eqref{Bulage}. At the trading date $k+1$,   the investor will hold a new portfolio matrix $\psi^{(k+1)}$. Meanwhile, at the date $k+1,$ the transaction cost  $T_{k+1}$ will 
be charged by the bank involved so that the funds at the date $k+1$ is $F'_{k+1}=F_k-T_{k+1}$. Thus,  the  investment for the currency pair $(i,j)$ at the date $k+1$ is
\begin{equation}\label{Oxford}
X'^{(k+1)}_{ij}=F'_{k+1}\psi^{(k+1)}_{ij}.\end{equation}
Henceforth,  from \eqref{Swansea} and \eqref{Oxford}, it follows that  the transaction cost between the currency pair   $(i,j)$  is
\begin{equation}\label{Cambridge}|X'^{(k+1)}_{ij} -X^{k}_{ij}|=\Big|F'_{k+1}\psi^{(k+1)}_{ij}-F'_k \psi^{(k)}_{ij}\Big(\bar r^{(k)}_{ij}+\underline r^{(k)}_{ij}\Big)\Big|.\end{equation}
We further suppose (for simplicity) that the transaction costs are the same for selling and buying foreign currencies. As a consequence, \eqref{Bulage} and  \eqref{Cambridge} imply that  the total 
transaction costs at the start of the date $(k+1)$ is  
\begin{equation}\label{panpan}
\begin{split}
Q_{k+1}=\sum_{i,j=1}^{m} |X'^{(k+1)}_{ij} -X^{k}_{ij}|
&=\sum_{i,j=1}^{m}\Big |F'_{k+1}\psi^{(k+1)}_{ij}-F'_k \psi^{(k)}_{ij}\Big(\bar r^{(k)}_{ij}+\underline r^{(k)}_{ij}\Big)\Big|\\
&=\sum_{i,j=1}^{m}\Big |(F_k-T_{k+1})\psi^{(k+1)}_{ij}-F'_k \psi^{(k)}_{ij}\Big(\bar r^{(k)}_{ij}+\underline r^{(k)}_{ij}\Big)\Big|.
\end{split}
\end{equation}
In the case that the transaction costs $T_{k+1}$ is linearly dependent  on the total investment $Q_{k+1}$ in the trading date $k+1$, there exists a positive constant $c$ such that $T_{k+1}=cQ_{k+1}$.  Accordingly, \eqref{panpan} yields that 
\begin{equation}\label{decrement}
T_{k+1} =cQ_{k+1}=
\sum_{i,j=1}^{m} \Big|F_{k}\psi^{(k+1)}_{ij}-F'_k \psi^{(k)}_{ij}\Big(\bar r^{(k)}_{ij}+\underline r^{(k)}_{ij}\Big)-T_{k+1}\psi^{(k+1)}_{ij}\Big|.
\end{equation}
For  $\psi^{(k+1)}\in P^{(k+1)}$ and $T_{k+1}\geqslant 0$, we set $$\Delta _{k+1}:=\sum_{i,j=1}^{m} \Big|F_k\psi^{(k+1)}_{ij}-F'_k\psi^{(k)}_{ij}\Big(\bar r^{(k)}_{ij}+\underline r^{(k)}_{ij}\Big)\Big|.$$ 
It is easy to see that
\begin{equation*} 
c\Big(\Delta _{k+1}-\sum_{i,j=1}^{m} T_{k+1}\psi^{(k+1)}_{ij}\Big)\le T_{k+1} \le c\Big(\Delta _{k+1}+\sum_{i,j=1}^{m} T_{k+1}\psi^{(k+1)}_{ij}\Big).
\end{equation*}
This, together with $\psi^{(k+1)}\in P^{(k+1)}$ and $\sum_{i,j=1}^{m} \psi^{(k+1)}_{ij}=1$,  leads to  
\begin{equation}\label{inq1}
c\Big(\Delta _{k+1}-T_{k+1}\Big)\le T_{k+1} \le c\Big(\Delta _{k+1}+\ T_{k+1}\Big).
\end{equation}
If  $c<1$,  then, we derive from   \eqref{inq1}  that 

\begin{equation}\label{ineq1}
\frac{c}{1+c}\Delta_{k+1}  \leqslant T_{k+1} \leqslant \frac{c}{1-c}\Delta_{k+1}.
\end{equation}
Let 
\begin{equation*}
	{\psi'}^{(k)}:= \left(\begin{array}{ccccc}
	{\psi'}_{11}^{(k)} & {\psi'}_{12}^{(k)}&\cdots&{\psi'}_{1m}^{(k)}\\
	{\psi'}_{21}^{(k)} & {\psi'}_{22}^{(k)}&\cdots&{\psi'}_{2m}^{(k)}\\
	\vdots & \vdots& \ddots&\vdots\\
	{\psi'}_{m1}^{(k)} & {\psi'}_{m2}^{(k)}&\cdots&{\psi'}_{mm}^{(k)}\\
	\end{array}
	\right),
	\end{equation*}
with  	
\begin{equation}\label{p1}
{\psi'}^{(k)}_{ij}:=\frac{F'_k}{F_k}\psi^{(k)}_{ij}\Big(\bar r^{(k)}_{ij}+\underline r^{(k)}_{ij}\Big).
\end{equation}	
We remark that ${\psi'}^{(k)}$ is the proportion matrix for the $m$ different currencies reached automatically  at the end of the date $k$. 

In what follows, we assume that $F_k>0$  and $\psi^{(k)}\diamond R^{(k)}>0.$ 
We define the {\it distance} between $\psi^{(k+1)}$ and $\psi'^{(k)}$  by
\begin{equation}\label{distance}
d(\psi^{(k+1)},\psi'^k):=\sum_{i,j=1}^{m}|\psi^{(k+1)}_{ij}-\psi'^{(k)}_{ij}|.
\end{equation}
By virtue of the notion of $\Delta _{k+1}$, one has 
\begin{equation*}
\begin{split}
\Delta _{k+1}=F_k\sum_{i,j=1}^{m} \Big|\psi^{(k+1)}_{ij}-\frac{F'_k}{F_k}\psi^{(k)}_{ij}\Big(\bar r^{(k)}_{ij}+\underline r^{(k)}_{ij}\Big)\Big)\Big|.
\end{split}
\end{equation*}
This, combining with \eqref{p1}, gives that 
\begin{equation*}
\Delta _{k+1}=F_k\sum_{i,j=1}^{m} \Big|\psi^{(k+1)}_{ij}-{\psi'}^{(k)}_{ij}\Big|=F_kd(\psi^{(k+1)},\psi'^{(k)}).
\end{equation*}
Substituting this into \eqref{ineq1}, we end up with the following 
\begin{equation}\label{ineq2}
\frac{c}{1+c} F_kd(\psi^{(k+1)},\psi'^{(k)}) \leqslant T_{k+1} \leqslant \frac{c}{1-c}F_kd(\psi^{(k+1)},\psi'^{(k)}).
\end{equation}

From (\ref{ineq2}),  we observe that the transaction cost at the date $k+1$ depends on the distance between $\psi^{(k+1)}$ and $\psi'^{(k)}$, and that  the  bigger distance 
between $\psi^{(k+1)}$ and $\psi'^{(k)}$ means the more transaction costs which further implies  less profit in the portfolio. Hence, to improve the profit for  the portfolios, it is 
essential for the investors  to shorten the distance between  $\psi^{(k+1)}$ and $\psi'^{(k)}$ . 

\section{Update rules for on-line portfolio selections}
As the high volatility of the currency exchange rates, the investors, in general,  try to buy and sell the currencies again and again to get more profits. Unfortunately, the more transactions 
means the more transaction costs. So, it is indispensable to optimise the portfolios in order  to evade  the unnecessary transaction costs, which is our goal in this section. 

Let $Z(\psi^{(k+1)})$ be the return of the portfolio $\psi^{(k+1)}$ at the date $k+1$ and assume that $Z(\cdot)$ admits the following form 
\begin{equation} \label{profit}
Z(\psi^{(k+1)})=\gamma Z_F(\psi^{(k+1)},R'^{(k+1)})-Z_T(\psi^{(k+1)}) , 
\end{equation}
where  
\begin{itemize}
\item \begin{equation}
R'^{(k+1)}:= \left(\begin{array}{ccccc}
	r_{11}^{'(k+1)} & \bar r_{12}^{'(k+1)}& \cdots&\bar r_{1m}^{'(k+1)}\\
	\underline r_{21}^{'(k+1)} & r_{22}^{'(k+1)}& \cdots&\bar r_{2m}^{'(k+1)}\\
	\vdots & \vdots& \ddots&\vdots\\
	\underline r_{m1}^{'(k+1)} &\underline  r_{m2}^{'(k+1)}&\cdots&r_{mm}^{'(k+1)}\\
	\end{array}
	\right)
	\end{equation}is the prediction of return matrix $R^{(k+1)}$ at the date $k+1;$

\item $ Z_F(\psi^{(k+1)},R'^{(k+1)})$ is a function of  investment increments with respect to the portfolio matrix  $\psi^{(k+1)} $ and the prediction $R'^{(k+1)}$ of $R^{(k+1)}$; 

\item $ Z_T(\psi^{(k+1)} )$ is the transaction cost which the  investor  pays  for the portfolio matrix $\psi^{(k+1)} $;

\item $\gamma>0$ is a parameter adopted to balance maximizing the investment increase and reducing the transactions. 
\end{itemize}
                                                                                                                                                                                                                                                                                                                                                                                                                                                                                                                                                                                                                                                                                                                                                                                                                                                                                                                                                                                                                                                                                                                                                                                                                                                                                                                                                                                                                                                                                                                                                                                                                                                                                                                                                                                                                                                                                                                                    In the present paper, the first alternative for the function $ Z_F$, denoted by $Z_F^{(1)}$,  admits the form 
\begin{equation}\label{z1}
Z_F^{(1)}(\psi^{(k+1)},R'^{(k+1)})=\psi^{(k+1)}\diamond R'^{(k+1)}.
\end{equation}
And the second choice for $ Z_F$, written by $Z_F^{(2)}$,  possesses the following representation 
\begin{equation}\label{z2}
\begin{split}
Z_F^{(2)}\Big(\psi^{(k+1)},R'^{(k+1)}\Big)=\log\Big(\psi'^{(k)}\diamond R'^{(k+1)}\Big)+\frac{R'^{(k+1)}\Big(\psi^{(k+1)}-\psi'^{(k)}\Big)}{\psi^{(k+1)}\diamond R'^{(k+1)}}.
\end{split}
\end{equation}
By the Taylor expansion formula  for the multivariate functions, we deduce that 
\begin{equation*}
\begin{split}
\log(\psi^{(k+1)}\diamond R'^{(k+1)})&\approx\log \bigg(\sum_{i,j=1}^{m}\Big( {\psi'}_{ij}^{(k)}(\bar r_{ij}^{'(k+1)}+\underline r_{ij}^{'(k+1)})\Big)\bigg)\\
&\quad+\frac{\sum_{i,j=1}^{m}(\bar r_{ij}^{'(k+1)}+\underline r_{ij}^{'(k+1)})
\Big(\psi_{ij}^{(k+1)}-\psi_{ij}'^{(k)}\Big)}{\sum_{i,j=1}^{m}\psi_{ij}^{(k+1)}(\bar r_{ij}^{'(k+1)}+\underline r_{ij}^{'(k+1)})}.
\end{split}
\end{equation*}
Therefore, $Z_F^{(2)}\Big(\psi^{(k+1)},R'^{(k+1)}\Big)$ is the first order approximation of $\log(\psi^{(k+1)}\diamond R'^{(k+1)})$.

 With regard to the term  $Z_T(\psi^{(k+1)})$, we define 
 \begin{equation}\label{Xibei}
 Z_	T(\psi^{(k+1)}):=d_{\mbox{re}}\Big(\psi^{(k+1)},\psi'^{(k)}\Big). 
 \end{equation}
 We refer the reader to \cite{ALZ2001, helmbold, kivinen} for more details. According to the definition of relative entropy for discrete random variables, we have the following 
\begin{align}\label{zd}
\begin{split}
d_{\mbox{re}}\Big(\psi^{(k+1)},\psi'^{(k)}\Big)&=\sum_{i,j=1}^{m}\bigg(\psi^{(k+1)}_{ij}\log\frac{\psi^{(k+1)}_{ij}}{\psi'^{(k)}_{ij}}\bigg)\\
&=\sum_{i,j=1}^{m}\bigg(\psi^{(k+1)}_{ij}\log\frac{\psi^{(k+1)}_{ij}\sum_{i,j=1}^{m}\Big( \psi_{ij}^{(k)}(\bar r^{(k)}_{ij}+\underline r^{(k)}_{ij})\Big)}{\psi^{(k)}_{ij}(\bar r^{(k)}_{ij}+\underline r^{(k)}_{ij})}\bigg).
\end{split}
\end{align}
By L'Hospital's rule, one has $\lim_{x\downarrow0}x\log x=0$. So, without loss of generality,  in \eqref{zd}, we can assume $\psi^{(k+1)}_{ij}\log\psi^{(k+1)}_{ij}=0$ whenever $\psi^{(k+1)}_{ij}=0.$
Note that for any constant $a>1,f(x)=x(\log x+\log a)$ is a convex function for $x>0$
due to the fact that $f''(x)>0$. Therefore,
$d_{\mbox{re}}\Big(\psi^{(k+1)},\psi'^{(k)}\Big)$ is  a positive continuous
convex function of $\psi^{(k+1)}_{ij}, 1\le i,j\le m$. If $ \psi^{(k+1)}=\psi'^{(k)}$, 
i.e., $\psi^{(k+1)}_{ij}=\psi'^{(k)}_{ij}$, then
$d_{\mbox{re}}\Big(\psi^{(k+1)},\psi'^{(k)}\Big)=0$. The minimum of
$d_{\mbox{re}}\Big(\psi^{(k+1)},\psi'^{(k)}\Big)$ can be achieved at $
\psi^{(k+1)}=\psi'^{(k)}$. Moreover, let $i_0,j_0\ge1$ be such that
$\psi'^{(k)}_{i_0j_0}=\min\{\psi'^{(k)}_{ij}\}$. Then the maximum of
$d_{\mbox{re}}\Big(\psi^{(k+1)},\psi'^{(k)}\Big)$ can be available whenever
$\psi^{(k+1)}_{ij}=1$ for $i=i_0,j=j_0$, otherwise
$\psi^{(k+1)}_{ij}=0$.
Inserting \eqref{z1}, \eqref{z2} and \eqref{Xibei} back into \eqref{profit}, respectively, we arrive at 
\begin{equation}\label{z111}
Z'^{(1)}(\psi^{(k+1)})=\gamma\Big( \psi^{k+1}\diamond R'^{(k+1)}\Big)-\sum_{i,j=1}^{m}\bigg(\psi^{(k+1)}_{ij}\log\frac{\psi^{(k+1)}_{ij}}{\psi'^{(k)}_{ij}}\bigg) 
\end{equation}
and 
\begin{equation}\label{z221}
\begin{split}
Z'^{(2)}(\psi^{(k+1)})&=\gamma \log\Big(\psi'^{(k)}\diamond R'^{(k+1)}\Big)+\frac{R'^{(k+1)}\Big(\psi^{(k+1)}-\psi'^{(k)}\Big)}{\psi^{(k+1)}\diamond R'^{(k+1)}}\\
&\quad-\sum_{i,j=1}^{m}\bigg(\psi^{(k+1)}_{ij}\log\frac{\psi^{(k+1)}_{ij}}{\psi'^{(k)}_{ij}}\bigg). 
\end{split}
\end{equation}
As $d_{\mbox{re}}$ is convex (as shown above) with respect to $\psi^{(k+1)}_{ij}$, $-d_{\mbox{re}}$ is concave with respect to $\psi^{(k+1)}_{ij}$. Observe that, except the entropy term $d_{\mbox{re}}$, 
the other terms are linear with respect to $\psi^{(k+1)}_{ij}$, which are obviously concave. As a result, we conclude that $Z'^{(1)}(\psi^{(k+1)})$ and $Z'^{(2)} (\psi^{(k+1)})$, defined 
in \eqref{z111} and \eqref{z221}, respectively,  are concave functions with respect to the portfolio matrix entries $\psi^{(k+1)}_{ij}$.

To maximise $Z'^{(1)}(\psi^{(k+1)})$ with the constraint $\psi^{(k+1)}\in P^{(k+1)}$ (so that $\sum_{i,j=1}^m\psi^{(k+1)}_{ij}=1$), we utilise Lagrange's method. 
To this end,  we consider the following auxiliary function
\begin{equation*}
\begin{split}
Z'^{(1)}(\psi^{(k+1)},\lambda)&=Z'^{(1)}(\psi^{(k+1)})+\lambda\Big(\sum_{i,j=1}^m\psi^{(k+1)}_{ij}-1\Big) 
\end{split}
\end{equation*}
where $\lambda\in  R$ is the Lagrange multiplier. According to \eqref{Qingmai} and \eqref{z111}, we obtain that 
\begin{equation}\label{baby}
\begin{split}
Z'^{(1)}(\psi^{(k+1)},\lambda)&=\gamma\sum_{i,j=1}^{m}\psi^{(k+1)}_{ij}\Big(\bar r_{ij}^{'(k+1)}+\underline r_{ij}^{'(k+1)}\Big)-\sum_{i,j=1}^{m}\bigg(\psi^{(k+1)}_{ij}\log\frac{\psi^{(k+1)}_{ij}}{\psi'^{(k)}_{ij}}\bigg)\\
&\quad+\lambda\Big(\sum_{i,j=1}^m\psi^{(k+1)}_{ij}-1\Big).
\end{split}
\end{equation}
Taking derivatives with respect to the variables $\psi^{(k+1)}_{ij}$ and $\lambda$ for $Z'^{(1)}(\psi^{(k+1)},\lambda)$ followed by letting $\frac{\partial }{\partial \psi^{(k+1)}_{ij}}Z'^{(1)}(\psi^{(k+1)},\lambda)=\frac{\partial }{\partial \lambda }Z'^{(1)}(\psi^{(k+1)},\lambda)=0$, we then have 
\begin{equation*}
\begin{split}
\gamma\Big(\bar r_{ij}^{'(k+1)}+\underline r_{ij}^{'(k+1)}\Big)-\log\psi^{(k+1)}_{ij} +\log\psi'^{(k)}_{ij}-1+\lambda=0.
\end{split}
\end{equation*}
This further implies that 
\begin{equation}\label{ba}
\begin{split}
\psi^{(k+1)}_{ij}&=\exp\bigg(\gamma\Big(\bar r_{ij}^{'(k+1)}+\underline r_{ij}^{'(k+1)}\Big)+\log\psi'^{(k)}_{ij}-1+\lambda\bigg)\\
&=\frac{\psi'^{(k)}_{ij}\exp\bigg(\gamma\Big(\bar r_{ij}^{'(k+1)}+\underline r_{ij}^{'(k+1)}\Big)\bigg)}{\e^{1-\lambda}}.
\end{split}
\end{equation}
In view of  $\sum_{i,j=1}^m\psi^{(k+1)}_{ij}=1$, we get from \eqref{ba} that 
\begin{equation*} 
\sum_{l,v=1}^m\frac{\psi'^{(k)}_{lv}\exp\bigg(\gamma\Big(\bar r_{lv}^{'(k+1)}+\underline r_{lv}^{'(k+1)}\Big)\bigg)}{\e^{1-\lambda}}=1.
\end{equation*}
Therefore, it follows that
\begin{equation*} 
\e^{1-\lambda}=\sum_{l,v=1}^m\psi'^{(k)}_{lv}\exp\bigg(\gamma\Big(\bar r_{lv}^{'(k+1)}+\underline r_{lv}^{'(k+1)}\Big)\bigg).
\end{equation*}
Substituting this into \eqref{ba} leads to 
\begin{equation}\label{portf}
\psi^{(k+1)}_{ij}=\frac{\psi'^{(k)}_{ij}\exp(\gamma (\bar r_{ij}^{'(k+1)}+\underline r_{ij}^{'(k+1)}))}{\sum_{v,l=1}^{m}\psi'^{(k)}_{vl}\exp(\gamma (\bar r_{vl}^{'(k+1)}+\underline r_{vl}^{'(k+1)}))}.
\end{equation}
(\ref{portf}) is the update rule of the Increment of the Investment with Transaction Cost (IITC for abbreviation) for the $(k+1)^{th}$ trading day.

Mimicking the procedure for  the derivation of \eqref{portf},  one can conclude that $Z'^{(2)}(\psi^{(k+1)})$ reaches its maximum at 
\begin{equation}\label{portf1}
\psi^{(k+1)}_{ij}=\frac{\psi'^{(k)}_{ij}\exp\Big(\frac{\gamma (\bar r_{ij}^{'(k+1)}+\underline r_{ij}^{'(k+1)})}{\psi'^{(k)} \diamond R'^{(k+1)}}\Big)}{\sum_{v,l=1}^{m}\psi'^{(k)}_{vl}
\exp\Big(\frac{\gamma (\bar r_{vl}^{'(k+1)}+\underline r_{vl}^{'(k+1)})}{\psi'^{(k)} \diamond R'^{(k+1)}}\Big)}.
\end{equation}
In our case,  \eqref{portf1}  is the update rule of the Exponential  Increment of the Investment with Transaction Cost (EIITC for short) for the  trading day $k+1$.

It is clear to see  that, in (\ref{portf}) and (\ref{portf1}), there are two parameters  $\gamma$ and $R'^{(k+1)}$ to be selected for the portfolio matrix $\psi^{(k+1)}$. Concerning the variable  
$\gamma$, different values yield different strategies. We would like to explicate a bit more details about the implications of the parameter $\gamma$. More precisely, (i)  $\gamma$ stands 
for the passive strategy; (ii) the smaller $\gamma$ signifies a weaker prediction of the portfolio matrix $\psi^{(k+1)}$ so that the investors prefer to hold the present portfolio matrix (i.e.,$\psi^{(k)}$) 
to avoid the decrements; (iii) the bigger $\gamma$ represents a stronger prediction of  the portfolio matrix $\psi^{(k+1)}$  so as to the investors prefer  to change the present portfolio  matrix (i.e., 
$\psi^{(k)}$) to earn more profits. With regard to $R'^{(k+1)}$, it is the prediction of the return at date $k+1$. Clearly, a high quality prediction is a power tool for investors to make a profitable 
decision  for their investments.

Before ending up this section, let us give some remarks. 
\begin{remark}
The quantity $Z_F^{(1)}(\psi^{(k+1)},R'^{(k+1)})$ indicates the increment of the fund at the date $k+1$. On the other hand, one can apply the distance of portfolio matrices  $\psi'^{(k)}$ and
$\psi^{(k+1)}$, defined in \eqref{distance}, to measure the decrement function.
\end{remark}

\begin{remark}
Apparently, in the realistic financial market, the investors prefer to give up the unprofitable transactions and to add profitable ones instead, in order to avoid the decrements. Consequently, 
the update rules \eqref{portf}  and \eqref{portf1}  can be attainable.
\end{remark}

\section{Prediction of the returns}
 
Prediction of the returns for a portfolio plays a vital role in  optimising  the portfolios in the foreign exchange markets. Motivated by \cite{ALZ2001}, in this paper,  we shall establish 
an algorithm to keep and/or to inject profitable pairs of currencies and to remove unprofitable pairs, which will be called the {\it cross rate} algorithm. 

In terms of the  mechanism of the foreign exchange market, some entries in the return matrix are vanished, see Equations \eqref{ce1} - \eqref{r2}. More precisely,  for any 
$(i,j)\in\Lambda_0\times\Lambda_0$,  $ \bar r_{ij}^{(k)}=0$ whenever $\underline{r}_{ij}^{(k)}\neq 0$, while $ \bar r_{ij}^{(k)}\neq0$ for $\underline{r}_{ij}^{(k)}=0$, and moreover 
$r_{ii}^{(k)}=0$. Therefore, there are $\frac{m(m-1)}{2}$ non-zero components  in the return matrix. As it is known, in the foreign currency market, the best profitable pair of currencies 
means the value of the return in the return matrix reaches to the maximum. In the sequel, let $R^{(k)}$ be the return matrix of the $m$ different currencies.  Let 
$\alpha:=\max\{\bar r_{ij}^{(k)}, \underline r_{ij}^{(k)}, \, i,j\in\Lambda_0\}$. Now, we define the order  $ O(R^{(k)})$ of $R^{(k)}$ as follows  

\begin{equation}\label{order}
O\Big(R^{(k)}\Big) =
\begin{cases}
1,  ~~~~~~~~~\mbox{ there is only one pair  } (i,j)\in \Lambda_0\times\Lambda_0 \mbox{ such that } \bar r_{ij}^{(k)}=\alpha\\
2,  ~~~~~~~~~\mbox{ there is only one pair  } (i,j)\in \Lambda_0\times\Lambda_0 \mbox{ such that } \underline r_{ij}^{(k)}=\alpha\\
0,  ~~~~~~~~~\mbox{ the others}.
\end{cases}\end{equation}
In fact, our order is based on the trading action which are either buying or selling foreign currencies. More precisely, if the order is $1$, the investor is going to sell the foreign currencies 
in terms of  the only one maximum component of return matrix appearing in the upper triangular part of $R^{(k)}$ (namely, $\max\{\bar r^{(k)}_{ij}\}>\max\{\underline r^{(k)}_{ij}\}$); while if the order 
is $2$, the investor is going to buy the foreign currencies in terms of the only one maximum component of return matrix appearing in the lower triangular part of  $R^{(k)}$  
(i.e., $\max\{\bar r^{(k)}_{ij}\}<\max\{\underline r^{(k)}_{ij}\}$). The order $0$ means there is no action for the investor at all. 

We call the return sequence $\{R^{(k)}\}_{\{1\le k\le N\}}$  is {\it strictly unequal} if 
\begin{equation*}
O(R^{(k)})\neq 0,~~~~~k\in \Lambda_1.
\end{equation*}
Define
\begin{equation*}
\mbox{Rev} (R^{(k)})=(R^{(k)})^T,
\end{equation*}
where $(R^{(k)})^T$ denotes the transpose of $R^{(k)}$. 
If  $ O(R^{(k)})\neq O(R^{(k-1)}) $  for some $k\in \Lambda_1$, then $k$ is called a {\it cross position}.  For $E,F\in \Lambda_0$ with $E<F$, set 
\begin{equation}\label{Marriott}
 \acute{D}(E,F):= \{R^{(k)},  k\in \{E, E+1, \cdots, F\}\}.
\end{equation}
Observe  that $\acute{D}(E,F)$ is the collection of all return matrices of the currencies involved  from the day $E$ to the day $F.$ Let 
 \begin{equation*}
C_{E,F}:= \sharp \{k:O(R^{(k)})\neq O(R^{(k-1)}), ~~k\in \{E, E+1, \cdots, F\}\},
\end{equation*}
 which counts the number of the cross positions from the day $E$ to the day $F$, where $\sharp \{...\}$ stands for the cardinal number of the set $\{...\}$. Moreover, $C_{E,F}$  defined above is 
 named as the {\it cross number} associated with the segment  $\acute{D}(E,F).$ Let \begin{equation}\label{Marina}
W_{E,F}:= \frac{C_{E,F}}{E-F+1},
\end{equation}
which is the proportion possessed  by the cross positions during the course of the day $E$ to the day $F$. We call $W_{E,F}$ the {\it cross rate} of the segment $\acute{D}(E,F)$. 
Let $L \ge1$ be a fixed integer. We then divide the investment period into different segments with the same length $L$ in the following manner. Taking $E=(n-1)L+1$ and $F=nL$ 
in \eqref{Marriott}, one has  
 \begin{equation*} 
 \acute{D}((n-1)L+1,nL)= \{R^{(k)},  k\in \{(n-1)L+1, (n-1)L, \cdots, nL\}\}.
\end{equation*}
In what follows, we shall write $\acute{D}_n((L)$  and $W_n(L)$ in lieu  of $ \acute{D}((n-1)L+1,nL)$  and $W_{(n-1)L+1,nL}$, respectively,  for brevity of notation.  That is, 
 \begin{equation*}
\acute{D}_n((L)= \acute{D}((n-1)L+1,nL)~~~~~\mbox{ and }~~~~~~~W_n(L)= W_{(n-1)L+1,nL}.
\end{equation*}
We see that $W_n(L)\in\{0,\frac{1}{L},\frac{2}{L},...,\frac{L-1}{L},1\}\subset[0,1]$,  so one can take $\Omega=[0,1]$ as a probability space endowed with the Borel $\sigma$-algebra 
$\mathcal{B}([0,1])$ and uniform probability measure $\mathbb{P}$. Then $W_n(L)$ is nothing but a (discrete) random variable on the Borel probability space $([0,1],\mathcal{B}([0,1]),\mathbb{P})$.    

\subsection{The cross rate approach}

Let $A:=[0,\frac{1}{2})$ and  $B:=[\frac{1}{2},1]$.  We define  
\begin{equation*}
P_{AB}(n):= \mathbb{P}(W_n(L)\in A,W_{n+1}(L)\in B).
\end{equation*}
 One can define $P_{AA}(n), P_{BA}(n) $ and $P_{BB}(n)$ similarly.
 
Here and in the sequel, we assume that $W_n(L)$ is stationary (i.e., $W_n(L)$ does not change with the shift of the parameter $n$)  so that $P_{AB}(n)$ (resp. $P_{AA}(n) $, $P_{BA}(n)$, and $P_{BB}(n)$)  is independent of $n$. Therefore, we can write $P_{AB}$ (resp.  $P_{AA} $, $P_{BA}$, and $P_{BB}$) instead of $P_{AB}(n)$ (resp.  $P_{AA}(n , P_{BA}(n)$, and $P_{BB}(n)$).  Observe that
  \begin{equation*}
\begin{split}  
1&=\mathbb{P}(W_{n+1}(L)\in [0,1], W_n(L)\in [0,1]) \\
&=\mathbb{P}(W_{n+1}(L)\in A, W_n(L)\in [0,1])+\mathbb{P}(W_{n+1}(L)\in B, W_n(L)\in [0,1])\\
&=\mathbb{P}(W_{n+1}(L)\in A, W_n(L)\in A)+\mathbb{P}(W_{n+1}(L)\in A, W_n(L)\in B)\\
&\quad+\mathbb{P}(W_{n+1}(L)\in B, W_n(L)\in A)+\mathbb{P}(W_{n+1}(L)\in B, W_n(L)\in B).
 \end{split}
 \end{equation*}
 Thus, one clearly has 
\begin{equation*}
P_{AA}+P_{AB}+P_{BA}+P_{BB}=1.
\end{equation*}
 
Next, we follow the three steps below to predict  the return matrix  $ R^{(k+1)} \in$ \'{D}$_{n+1}(L)$.
\begin{itemize}
\item[{\bf Step 1: }]{\bf  The prediction of  $W_{n+1}(L)$}, denoted by $W'_{n+1}(L)$.  So far, there are several {\it methods  to predict the cross rate} (MPCR for abbreviation)  $W_{n+1}(L)$  
via $W_i(L),i=1,...,n$ for  $n\in \mathbb{N}$.  For more details, the reader is referred to \cite{ALZ2001,PanpanRenWu}. In this paper, we are interested in  the following two strategies: 
 for $P_{AA}+P_{BB}\geqslant \frac{1}{2}$, 
\begin{equation*}
\textbf{MPCR1:}\,\, ~~~~~~~~W'_{n+1}(L)=W(L), ~~~~~~~~n\in \mathbb{N}
\end{equation*}
and,  for $P_{AB}+P_{BA}\geqslant \frac{1}{2}$,
\begin{equation*}
\textbf{MPCR2:} \,\,~~~~~~~
W'_{n+1}(L)=\begin{cases}
c_A, &\text{if }W_n(L)\in B\\
c_B, &\text{if }W_n(L)\in A,
\end{cases}
\end{equation*}
where $c_A\in A$ and $c_B\in B$ are some  constants.
\item[{\bf Step 2:}] {\bf  The prediction of  the order  for} $R^{(k+1)}\in $ \'{D}$_{n+1}(L)$, denoted by $O'(R^{(k+1)})$. 		
There are two approaches which can be used to predict the order of $R^{(k+1)}$ and we listed them below
\begin{equation*}
\begin{split}
\textbf{MPO1:}  ~~~O'(R^{(k+1)})=\begin{cases}
O( \mbox{Rev}(R^{(k)})), &~~~~~~~\text{if }W'_{n+1}(L)\in [\frac{1}{2},1]\\
O(R^{(k)}), &~~~~~~~\text{if }W'_{n+1}(L)\in [0,\frac{1}{2}]
\end{cases}
\end{split}
\end{equation*}
and
\begin{equation*}      
\textbf{MPO2:} \,\,\,~~O'(R^{(k+1)})=\begin{cases}
O(R^{(k-1)}), &\text{if }W'_{n+1}(L)\in [\frac{1}{2},1]\\
O(R^{(k)}), &\text{if }W'_{n+1}(L)\in [0,\frac{1}{2}].
\end{cases}
\end{equation*}
 
\item[{\bf Step 3:}] {\bf The prediction of $R^{(k+1)}$}, denoted by $R'^{(k+1)}$. Concerning  MPO1, 
\begin{equation*}
R'^{(k+1)}=\begin{cases}
\mbox{ Rev}(R^{(k)}), &\text{if }W'_{n+1}(L)\in [\frac{1}{2},1]\\
R^{(k)}, &\text{if }W'_{n+1}(L)\in [0,\frac{1}{2}]
\end{cases}
\end{equation*}
and,  for MPO2 , 
\begin{equation*}
R'^{(k+1)}=\begin{cases}
R^{(k-1)}, &\text{if }W'_{n+1}(L)\in [\frac{1}{2},1]\\
R^{(k)} &\text{if }W'_{n+1}(L)\in [0,\frac{1}{2}]\, .
\end{cases}
\end{equation*}
 
\end{itemize}

The three procedures above applied to obtain the prediction $R'^{(k+1)}$ of $R^{(k+1)}$  via MPCR and MPO  is called a {\it cross rate} method, which is denoted by CR(MPCR, MPO, $R'^{(k+1)}$).

\subsection{The adjusted cross rate method}
In the previous subsection, we consider only the case that the return sequence $\{R^{(k)}\}_{1\le k\le N}$  is strictly unequal. In this subsection, we move forward to investigate the setting which 
allows the return sequence need not to be strictly unequal. To cope with this setup, we need to  adjust the cross rate method introduced previously.  For $E,F\in \Lambda_0$ with $E<F$, set 
\begin{equation*}
C'_{E,F}:=\sharp \{k:O(R^{l(k)})\neq O(R^{(k)})\text{ and }  O(R^{(k)})\neq 0, E\leqslant k\leqslant F\},
\end{equation*}
where
\begin{equation*}
l(k):=\max\{l:l<k,O(R^l)\neq 0\}.
\end{equation*}
According to the definition of $C'_{E,F}$, $R^{l(k)}$ is the return matrix which is nearest to $R^{(k)}$, where $O(R^{(k)})\neq 0$,  and whose order is different from  that of $R^{(k)}$ . 

Define  the  cross rate $W'_{E,F} $ of $ \acute{D}(E,F)$, introduced in \eqref{Marriott}, by 
\begin{equation}\label{Debenham}
W'_{E,F} =\frac{C'_{E,F}}{n_{E,F}},
\end{equation}
where
\begin{equation*}
n_{E,F}:=\sharp\{k:O(R^{(k)})\neq \Delta, E\leqslant k\leqslant F\}.
\end{equation*}
In \eqref{Debenham}, choosing $E=(n-1)L+1$ and $F=nL$, we have 
\begin{equation*} 
W'_{(n-1)L+1,nL} =\frac{C'_{(n-1)L+1,nL}}{n_{(n-1)L+1,nL}}. 
\end{equation*}
In the sequel, for notation simplicity, we shall write $ W'_{L} $ instead of $W'_{(n-1)L+1,nL} $. 

Following the procedure of  CR(MPCR, MPO, $R'^{(k+1)}$) for the strictly unequal framework, we adopt the following steps to 
predict the return matrix $R^{(k+1)}$. 
  
\begin{itemize}
\item[{\bf Step 1:}] {\bf The prediction of $W'_{n+1}(L)$}, denoted by $W''_{n+1}(L)$.  There are two methods: 
\begin{equation}
\textbf{MPCR1$'$:} \,\, ~~~~~~~~~W''_{n+1}(L)=W(L), ~~~~~n\in \mathbb{N},
\end{equation}
and
\begin{equation}
\textbf{MPCR2$'$:} \,\, ~~~~~~~~~~W''_{n+1}(L)=\begin{cases}
c_A, &\text{if }W_n(L)\in B,\\
c_B, &\text{if }W_n(L)\in A,
\end{cases}
\end{equation}
where $c_A\in A$ and $c_B\in B$ are some  constants.

\item[{\bf Step 2:}]  {\bf The prediction of the order for $R^{(k+1)}$}, denoted by $O'(R^{(k+1)})$. More precisely, 
\begin{equation*}
\textbf{MPO1$'$:} \,\,~~~~~~~~ O'(R^{(k+1)})=\begin{cases}
O(\mbox{\mbox{Rev}}(R^{(l(k+1))})), &\text{if }W''_{n+1}(L)\in [\frac{1}{2},1]\\
O(R^{(k+1)}), &\text{if }W''_{n+1}(L)\in [0,\frac{1}{2}]
\end{cases}
\end{equation*}
and
\begin{equation*}
\textbf{MPO2$'$:} \,\, ~~~~~~~~~~~~~O'(R^{(k+1)})=\begin{cases}
O((R^{(l(l(k+1)))})), &\text{if }W''_{n+1}(L)\in [\frac{1}{2},1]\\
O(R^{l(k+1)}), &\text{if }W''_{n+1}(L)\in [0,\frac{1}{2}]\, .
\end{cases}
\end{equation*}
\item[{\bf Step 3:}] {\bf The prediction of $R^{(k+1)}$}, denoted by $R'^{(k+1)}$. 
For MPO1$'$,  
\begin{equation*}
R'^{(k+1)}=\begin{cases}
\mbox{Rev}(R^{(l(k+1))}), &\text{if }W''_{n+1}(L)\in [\frac{1}{2},1]\\
R^{l(k+1)}, &\text{if }W''_{n+1}(L)\in [0,\frac{1}{2}]
\end{cases}
\end{equation*}
and, for MPO2$'$,  
\begin{equation*}
R'^{(k+1)}=\begin{cases}
R^{(l(l(k+1)))}, &\text{if }W''_{n+1}(L)\in [\frac{1}{2},1]\\
R^{(l(k+1))}, &\text{if }W''_{n+1}(L)\in [0,\frac{1}{2}].
\end{cases}
\end{equation*}
\end{itemize}

\begin{rem}
{\rm There are the other alternatives to define the order of the return matrix $R^{(k)}$. Assume that there are three currencies in the foreign exchange market.  
Define the following counting measure
\begin{equation}\label{on}
\begin{split}
\pi^{k}&=\sharp\Big\{(i,j)\in \Lambda_3\times\Lambda_3: \bar r_{ij}^*\in\Big[ \max_{v\in\Lambda_3}\{r_{lv}^{(k)}\}-\varepsilon,
~~\max_{l,v\in\Lambda_3}\{r_{lv}^{(k)}\} \Big] \\
&~~~~~~\mbox{ or } \underline r_{ij}^*\in\Big[
\max_{l,v\in\Lambda_3}\{r_{lv}^{(k)}\}-\varepsilon,
~~\max_{l,v\in\Lambda_3}\{r_{lv}^{(k)}\} \Big]  \Big\},
\end{split}
\end{equation}
where $\Lambda_3:=\{1,2,3\}$ and $\varepsilon>0$ is some constant. It is easy to see that $\pi^{k}\in \{1,2,3\}$. In the sequel, let $R^{(k)}$ be the return matrix of the currency $1$, the currency $2$ and the currency $3$ at the day $k$, which admits the form below
\begin{equation}\label{example}
R^{(k)}=\left(\begin{array}{ccccc}
        0 & \bar r_{12}^{(k)}& \bar r_{13}^{(k)}\\
        0 & 0&0\\
        0&\underline  r_{32}^{(k)}&0\\
        \end{array}
        \right),
\end{equation}
where $\bar r_{12}^{(k)}\neq0, \bar r_{13}^{(k)}\neq0$ and $\underline r_{32}^{(k)}\neq0$. Next, we define the order $\tilde
O(R^{(k)})$ of $R^{(k)}$ by
\begin{equation}\label{Duonao}
\tilde O\Big(R^{(k)}\Big)=
\begin{cases}
(1,2),~~~~~~~~~~~~~~~~~\bar r_{12}^{(k)}=\max\{\bar r_{12}^{(k)}, \bar r_{13}^{(k)},\underline r_{32}^{(k)}\},\\
(1,3),~~~~~~~~~~~~~~~~~\bar r_{13}^{(k)}=\max\{\bar r_{12}^{(k)}, \bar r_{13}^{(k)},\underline r_{32}^{(k)}\},\\
(3,2),~~~~~~~~~~~~~~~~~\bar r_{32}^{(k)}=\max\{\bar r_{12}^{(k)}, \bar r_{13}^{(k)},\underline r_{32}^{(k)}\},\\
(1,2)\cup(1,3),~~~~~~~\bar r_{12}^{(k)}=\bar r_{13}^{(k)}=\max\{\bar r_{12}^{(k)}, \bar r_{13}^{(k)},\underline r_{32}^{(k)}\},\\
(1,3)\cup(3,2),~~~~~~~\bar r_{13}^{(k)}=\underline r_{32}^{(k)}=\max\{\bar r_{12}^{(k)}, \bar r_{13}^{(k)},\underline r_{32}^{(k)}\},\\
(3,2)\cup (1,2),~~~~~~~\bar r_{32}^{(k)}=\bar r_{12}^{(k)}=\max\{\bar r_{12}^{(k)}, \bar r_{13}^{(k)},\underline r_{32}^{(k)}\},\\
\Delta,~~~~~~~~~~~~~~~~~~~~~\bar r_{32}^{(k)}=\bar r_{12}^{(k)}=\bar
r_{13}^{(k)}=\max\{\bar r_{12}^{(k)}, \bar r_{13}^{(k)},\underline
r_{32}^{(k)}\}.
\end{cases}
 \end{equation}
While, the order above does not work very well to show the effectiveness of the cross rate method. By the cluster idea, for the first three case (i.e., $\pi^{k}=1$ ), we can regard the pairs  $(1,2)$, $1,3)$ and $(3,2)$ as the same. Also, for the cases 4-6 (i.e., $\pi^{k}=2$), we  regard the pairs $(1,2)\cup(1,3)$, $(1,3)\cup(3,2)$ and $(3,2)\cup (1,2)$ are identical. So, we modify the order $\tilde O(R^{(k)})$  to redefine the order $O(R^{(k)})$ of $R^{(k)}$ by
\begin{equation}\label{ce3}
O(R^{(k)})=
\begin{cases}
1,~~~~~~~~~\pi^{k}=1,\\
2,~~~~~~~~~\pi^{k}=2,\\
0,~~~~~~~~~\pi^{k}=3.
\end{cases}
\end{equation}
Although the order \eqref{ce3} works for the first two steps of the cross rate method, it is unavailable to predict the value of
$R^{(k)}$ in the third step since we cannot write explicitly the reverse of $R^{(k)}$.}
\end{rem}
 
\section{Main results}
 
\subsection{The cross rate scheme} 
Set 
 \begin{equation}\label{Futon}
\theta_n(CR):=\frac{\sharp\{O'(R^{(k)}):O'(R^{(k)})=O(R^{(k)}), R^{(k)}\in \acute{D}_n(L))\}}{L}.
\end{equation}
The numerator on the right hand side of \eqref{Futon} counts the total number that  the prediction order is the same as the genuine order of $R^{(k)}$ during the trading day from the day $(n-1)L+1$ to the day $nL.$ In \eqref{Futon}, $\theta_n$ is called 
 the success rate of the CR(MPCR, MPO, $R'^{(K+1)}$) for the segment sequence $ \acute{D}_n(L)$. 
If
\begin{equation*}
\theta_n(CR)\geqslant \frac{1}{2},
\end{equation*}
then we say CR(MPCR, MPO, $R'^{(K+1)}$) is \textit{effective} for the segment $ \acute{D}_n(L)$ . 

Using the two update rules  IITC (\eqref{portf}) or EIITC(\eqref{portf1}) with the effective CR(MPCR, MPO, $R'^{(K+1)}$), we define a profitable strategy for the whole daily return sequence $ \acute{D}_n(L)$, 
as follows 
\begin{align}\label{cr}
~&\eta (\acute{D}(N),MPCR, MPO,R^{(k+1)})\nonumber\\
:=&\frac{\sharp\{\acute{D}_n(L):CR(MPCR,MPO,R'^{(k+1)}) \text{is effective for \'{D}}_n(L)\}}{\sharp\{\acute{D}_n(L)\}}
\geqslant \frac{1}{2}.
\end{align}
In the trading day $k+1$, investors apply the result of effective  CR(MPCR,MPO,$R'^{(K+1)}$) in the $ \acute{D}_n(L)$ and combine the  two update rules ( IITC (\eqref{portf})  and  EIITC(\eqref{portf1}) ) to update their portfolios to gain more profits.
 
\begin{lemma}\label{eff}
 
The CR($MPCR^{a}$, $MPO^{b}$, $R'^{(K+1)}$) for a,b=1,2 with segment sequence $ \acute{D}_n(L)$ is effective if we  hold 
 either
\begin{equation}\label{cond1}
W'_{n+1}(L), W_{n+1}(L)\in [0,\frac{1}{2})
\end{equation}
or
\begin{equation}\label{cond2}
W'_{n+1}(L), W_{n+1}(L)\in [\frac{1}{2},1]. 
\end{equation}
\end{lemma}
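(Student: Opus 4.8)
The plan is to unwind the definition of effectiveness: by \eqref{Futon} the scheme CR($MPCR^{a}$, $MPO^{b}$, $R'^{(k+1)}$) is effective for the segment $\acute D_{n+1}(L)$ exactly when $\theta_{n+1}(CR)\ge\tfrac12$, i.e.\ when the predicted order $O'(R^{(k)})$ coincides with the true order $O(R^{(k)})$ for at least $L/2$ of the $L$ indices with $R^{(k)}\in\acute D_{n+1}(L)$. Since the return sequence is strictly unequal, every order lies in $\{1,2\}$, and by \eqref{Marina} (together with the definition of $C_{E,F}$) the number of cross positions inside the segment equals $L\,W_{n+1}(L)$. Note also that the superscript $a$ enters only through the value $W'_{n+1}(L)$, which the hypothesis already localizes in one of the two half-intervals; so it suffices to run through the two branches of $MPO^{b}$, split according to which half-interval contains $W'_{n+1}(L)$ (hence, by hypothesis, $W_{n+1}(L)$).

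First I would treat the case $W'_{n+1}(L),W_{n+1}(L)\in[0,\tfrac12)$. Here Step 2 gives, for both $MPO^{1}$ and $MPO^{2}$, the prediction $O'(R^{(k+1)})=O(R^{(k)})$, which agrees with $O(R^{(k+1)})$ precisely at the non-cross positions of the segment; their number is $L-L\,W_{n+1}(L)=L\bigl(1-W_{n+1}(L)\bigr)>L/2$, so $\theta_{n+1}(CR)>\tfrac12$. For $W'_{n+1}(L),W_{n+1}(L)\in[\tfrac12,1]$ with $MPO^{1}$, the key point is that transposing a return matrix swaps its upper- and lower-triangular entries but leaves the (unique, off-diagonal) maximal value $\alpha$ unchanged, so in the strictly-unequal regime $O(\mathrm{Rev}(R^{(k)}))=O\bigl((R^{(k)})^{T}\bigr)=3-O(R^{(k)})$. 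Hence the prediction $O'(R^{(k+1)})=O(\mathrm{Rev}(R^{(k)}))$ is correct exactly when $O(R^{(k+1)})\ne O(R^{(k)})$, i.e.\ at the cross positions, of which there are $L\,W_{n+1}(L)\ge L/2$; thus $\theta_{n+1}(CR)\ge\tfrac12$.

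The remaining case --- $W'_{n+1}(L),W_{n+1}(L)\in[\tfrac12,1]$ with $MPO^{2}$ --- is the one I expect to be the main obstacle. Here the prediction is $O'(R^{(k+1)})=O(R^{(k-1)})$, and because orders take only the values $1,2$, one has $O(R^{(k-1)})=O(R^{(k+1)})$ if and only if the positions $k$ and $k+1$ carry the same cross-status (both cross, or both non-cross); equivalently $\theta_{n+1}(CR)=1-D/L$, where $D$ is the number of sign changes of the cross-indicator sequence across the segment. So what must be shown is $D\le L/2$ given $W_{n+1}(L)\ge\tfrac12$, and this does not follow from the bare count of cross positions alone --- it needs extra structural input, e.g.\ that cross positions within a segment occur in runs rather than alternating, or it can be checked directly at the extreme $W_{n+1}(L)=1$ (where $O(R^{(k-1)})=O(R^{(k+1)})$ automatically) and then propagated downward. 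I would isolate this as a separate claim, dispatch the two easy cases and $MPO^{1}$ exactly as above, and concentrate the real work on the sign-change estimate for the $MPO^{2}$ high-cross-rate branch, which is the crux of the lemma.
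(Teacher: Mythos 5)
Your handling of three of the four branch combinations is exactly the paper's argument: for $W'_{n+1}(L),W_{n+1}(L)\in[0,\tfrac12)$ both MPO rules predict $O'(R^{(k+1)})=O(R^{(k)})$, which is correct precisely at the non-cross positions, and these are more than half the segment; for $W'_{n+1}(L),W_{n+1}(L)\in[\tfrac12,1]$ with MPO1, the fact that transposition exchanges the upper and lower triangular parts gives $O(\mathrm{Rev}(R^{(k)}))=3-O(R^{(k)})$ in the strictly unequal regime, so the prediction is correct exactly at the cross positions, of which there are at least half. The paper does the same position-by-position; your formulation via the count $L\,W_{n+1}(L)$ and the definition of $\theta_n$ is just a cleaner phrasing of it.

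The genuine gap is the one you flagged and did not close: MPO2 with $W'_{n+1}(L),W_{n+1}(L)\in[\tfrac12,1]$. There the prediction $O'(R^{(k+1)})=O(R^{(k-1)})$ is correct iff $k$ and $k+1$ have the same cross-status, and, as you observe, this count is not controlled by $W_{n+1}(L)\ge\tfrac12$ alone: with orders $1,2,1,1,2,1$ across a segment of length $5$ (and order $1$ at the position two days before the segment) the cross rate is $4/5$, yet $O(R^{(k-1)})=O(R^{(k+1)})$ at only two of the five positions, so $\theta<\tfrac12$. The paper disposes of this case by asserting that whenever $nL+k+1$ is a cross position one must also have $O(R^{(nL+k-1)})\neq O(R^{(nL+k)})$, claiming that otherwise ``more than half the points of $\Theta$'' would fail to be cross positions, contradicting $W_{n+1}(L)\in[\tfrac12,1]$; but a single non-cross position at $nL+k$ produces no such contradiction, so the paper's step is valid only under additional structure (e.g.\ every point of the segment is a cross position, or cross and non-cross positions do not alternate) — precisely the ``extra structural input'' you identified as missing. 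In short, your proposal is incomplete on the crux case, but your diagnosis of why the bare count is insufficient is correct, and the paper's own proof of that case tacitly assumes the structural property rather than deriving it; a complete argument must either add such a hypothesis or weaken the effectiveness claim for MPO2.
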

 
\begin{proof}
If $W_{n+1}(L)\in[0,\frac{1}{2})$, then more than half  points of the set $\Theta:=\{nL+k,k=1,2,\cdots, L\}$ are not cross positions. In what follows, we take $nL+k+1\in \Theta$ and assume that $nL+k+1$ is not a cross position so that 
\begin{equation}\label{Talbot}
O(R^{(nL+k+1)})=O(R^{(nL+k)}).
\end{equation}
Next, due to $W_{n+1}(L)\in[0,\frac{1}{2})$, we have 
\begin{equation}\label{Faraday}
O'(R^{(nL+k+1)})=O(R^{(nL+k)}) 
\end{equation}
according to MPO1. Therefore, \eqref{Talbot} and \eqref{Faraday} yields that 
\begin{equation} \label{Tenby}
O(R^{(nL+k+1)})=O'(R^{(nL+k+1)}).
\end{equation}
Therefore, CR$(\mbox{MPCR1,  MPO1},~  R'^{(k+1)})$ is effective whenever  $W'_{n+1}(L), W_{n+1}(L)\in [0,\frac{1}{2})$.

If $W_{n+1}(L)\in[\frac{1}{2},1]$, then more than half  points of the set $\Theta$  are cross positions. In the sequel, we take 
$nL+k+1\in \Theta$ and assume that $nL+k+1$ is   a cross position such that 
\begin{equation}\label{swanseabay}
O(R^{(nL+k+1)})\neq O(R^{(nL+k)}).
\end{equation}
On the other hand, if $W_{n+1}(L)\in[\frac{1}{2},1]$, thus one has 
\begin{equation}\label{Oxwich}
O'(R^{(nL+k+1)})=O(Rev(R^{(nL+k)})).
\end{equation}
If $O(R^{(nL+k)})=1$, then $O'(R^{(nL+k+1)})=2$ by \eqref{Oxwich}. Moreover, from \eqref{swanseabay}, it follow that $O(R^{(nL+k+1)})=2$ by noting that $O(\cdot)$ takes only two values. Therefore,   \eqref{Tenby} holds. 
 Likewise, if  $O(R^{(nL+k)})=1$, we can deduce that \eqref{Tenby} is true. In all, \eqref{Tenby}  holds true for any cases. Consequently, CR$(\mbox{MPCR1,  MPO1},~  R'^{(k+1)})$ is effective provided that $W'_{n+1}(L), W_{n+1}(L)\in [\frac{1}{2},1]$.  
 
 Below, we assume that $nL+k+1\in \Theta$ such that \eqref{swanseabay}. In case $W_{n+1}'(L)\in[\frac{1}{2},1]$, in the light of MPO2, one has 
 \begin{equation}\label{airshow}
O'(R^{(nL+k+1)})=O(R^{(nL+k-1)}).
\end{equation}
If $O(R^{(nL+k-1)})=O(R^{(nL+k)})$, then we can deduce that more than half points of the set $\Theta$, which is contradictory with $W_{n+1}(L)\in[\frac{1}{2},1]$. As a consequence, we arrive at 
\begin{equation}\label{Iphone}
 O(R^{(nL+k-1)})\neq O(R^{(nL+k)})
 \end{equation}
 Taking \eqref{swanseabay}, \eqref{airshow} as well as \eqref{Iphone} into consideration, we derive that 
 \begin{equation}\label{wechat}
O'(R^{(nL+k+1)})=O(R^{(nL+k+1)}).
\end{equation}
Indeed, if $O(R^{(nL+k)})=1$, then $O(R^{(nL+k+1)})=1$ from \eqref{swanseabay} and 
 $O(R^{(nL+k-1)})=2$ due to \eqref{Iphone}, which implies $O'(R^{(nL+k+1)})=1$. In a similar way, we can show that \eqref{wechat} holds true. Therefore, CR$(\mbox{MPCR1,  MPO2},~  R'^{(k+1)})$ is effective for $W'_{n+1}(L), W_{n+1}(L)\in [\frac{1}{2},1]$.   Since the other situations can be dealt with similarly, we herein omit the corresponding details. 
\end{proof}

By a close inspection of the lemma above,  we deduce  that CR($MPCR^{a}$,  $MPO^{b}$, $R'^{(k+1)}$) for $ a,b=1,2 $ with the  segment  $ \acute{D}_n(L)$ is effective in the case that  both $ W_{n+1}(L)$ and  $ W'_{n+1}(L)$ belong   to the same intervals $[0,\frac{1}{2})$  or 
$[\frac{1}{2},1]$. Nevertheless,    the values of $ W_{n+1}(L)$ and  $ W'_{n+1}(L)$  need not to be identical. 
 
 \begin{defn}\label{London}
 A sequence  $\{\varsigma_n\}_{n\ge 1}$  is called finitely dependent if there exists some $K>0$ such that $\varsigma_n\ $ and $\varsigma_{n+K1}\ $ are independent for any $n$, $K_1\in \mathbb{N}$ and $K_1\geqslant K$. 
 \end{defn}
  
In particular, by taking $K_1=K$, Definition \ref{London} shows that $\varsigma_n\ $is independent of $\varsigma_{n+K_1}\ $, but need not to be independent of $\varsigma_k$ for  $n<k<n+K_1$.
 
The following lemma is taken from \cite{ALZ2001}. 
 
\begin{lemma}\label{lm5.2}
If a sequence $\{\varsigma_n\}_{n\geqslant1}$ is finitely dependent of bounded random variables and $\mathbb{E}\varsigma _n\geqslant c$, for some constant $c$ and for any $n\in \mathbb{N}$, then
\begin{equation}
\lim_{N\rightarrow\infty} \frac{1}{N}\sum_{n=1}^{N} \varsigma_n\geqslant c, \quad a.s.
\end{equation}
\end{lemma}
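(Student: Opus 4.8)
The plan is to reduce the statement to the classical strong law of large numbers, applied separately along arithmetic progressions of the index set. First I would center and normalise: set $\xi_n:=\varsigma_n-\mathbb{E}\varsigma_n$, so that $\mathbb{E}\xi_n=0$ and $|\xi_n|\le M$ for some constant $M$, the $\varsigma_n$ being bounded. Since $\frac1N\sum_{n=1}^N\mathbb{E}\varsigma_n\ge c$ for every $N$, it suffices to prove that $\frac1N\sum_{n=1}^N\xi_n\to0$ a.s.; this then gives $\liminf_{N\to\infty}\frac1N\sum_{n=1}^N\varsigma_n\ge c$ a.s. (The displayed conclusion should be read as a $\liminf$, because the averages $\frac1N\sum_{n=1}^N\mathbb{E}\varsigma_n$ of the deterministic sequence $(\mathbb{E}\varsigma_n)$ need not converge.) Next, fix the dependence range $K$ from Definition~\ref{London} and split the index set into residue classes modulo $K$: for each $r\in\{1,\dots,K\}$ consider the subsequence formed by all $\xi_i$ with $i\equiv r\pmod K$. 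Any two distinct members of such a subsequence have indices differing by a nonzero multiple of $K$, hence differing by at least $K$, so they are independent by Definition~\ref{London}. Thus each residue class furnishes a sequence of pairwise independent, uniformly bounded, mean-zero random variables.

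The core of the proof is a strong law for one such subsequence; since only pairwise independence is available, I would run the elementary second-moment argument rather than quote a theorem requiring joint independence. Fix $r$ and let $T^{(r)}_n$ denote the sum of the first $n$ members of the $r$-th subsequence. Pairwise independence and $|\xi|\le M$ give $\mathrm{Var}\bigl(T^{(r)}_n\bigr)\le M^2n$, hence $\mathrm{Var}\bigl(T^{(r)}_n/n\bigr)\le M^2/n$. Along $n_k:=k^2$ we get $\sum_k\mathrm{Var}\bigl(T^{(r)}_{n_k}/n_k\bigr)\le M^2\sum_k k^{-2}<\infty$, so Chebyshev's inequality together with the Borel--Cantelli lemma yields $T^{(r)}_{n_k}/n_k\to0$ a.s. For an arbitrary index $n$ with $n_k\le n<n_{k+1}$ one has $\bigl|T^{(r)}_n-T^{(r)}_{n_k}\bigr|\le(n_{k+1}-n_k)M=(2k+1)M$, and therefore $\bigl|T^{(r)}_n/n-T^{(r)}_{n_k}/n_k\bigr|\le\frac{(2k+1)M}{n_k}+\frac{n_{k+1}-n_k}{n_k}\cdot\frac{\bigl|T^{(r)}_{n_k}\bigr|}{n_k}$; both summands tend to $0$ because $(n_{k+1}-n_k)/n_k=O(1/k)$ and $|T^{(r)}_{n_k}|/n_k\to0$ a.s. Hence $T^{(r)}_n/n\to0$ a.s.

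Finally I would recombine the $K$ subsequences. For $N\in\mathbb{N}$ let $m_r(N)$ be the number of indices congruent to $r$ modulo $K$ that do not exceed $N$; then $m_r(N)/N\to1/K$, $\sum_{r=1}^K m_r(N)=N$, and $\frac1N\sum_{n=1}^{N}\xi_n=\sum_{r=1}^{K}\frac{m_r(N)}{N}\cdot\frac{T^{(r)}_{m_r(N)}}{m_r(N)}$. Each weight $m_r(N)/N$ is bounded and each factor $T^{(r)}_{m_r(N)}/m_r(N)\to0$ a.s. by the previous paragraph, so this finite sum tends to $0$ a.s. By the reduction in the first paragraph this is exactly what we needed, and the proof is complete.

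The step I expect to be most delicate is the interplay between the weakness of the hypothesis and the recombination: Definition~\ref{London} as stated supplies only pairwise independence inside a residue class, so one must avoid invoking any strong law that requires mutual independence or identically distributed summands, and the uniform boundedness of the $\varsigma_n$ is precisely what makes the second-moment/Chebyshev route succeed. A secondary point is the routine but slightly fiddly bookkeeping with the incomplete last block when $K\nmid N$, together with the fact that the conclusion is genuinely a $\liminf$ statement — which is all that is used where the lemma is applied, for instance to the success-rate quantities entering \eqref{cr}.
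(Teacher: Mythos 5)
Your proposal is correct, but there is nothing in the paper to compare it with: the paper states Lemma \ref{lm5.2} without proof, importing it from \cite{ALZ2001}, so your argument is a self-contained substitute rather than a variant of an in-paper proof. What you do is the standard blocking scheme and it works: after centering, Definition \ref{London} indeed gives independence (hence uncorrelatedness) of any two terms whose indices differ by at least $K$, so each residue class modulo $K$ consists of pairwise independent, uniformly bounded, mean-zero variables; the Rajchman-type second-moment argument (Chebyshev along $n_k=k^2$, Borel--Cantelli, and boundedness to bridge between $n_k$ and $n_{k+1}$) then needs nothing stronger than pairwise independence, which is exactly the right level of generality since the definition never asserts joint independence of a subsequence. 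The recombination with weights $m_r(N)/N\to 1/K$ is also fine (for small $N$ some $m_r(N)$ may vanish, in which case the corresponding term is simply absent). Two reading choices you make are the sensible ones and worth keeping explicit: ``bounded'' must be read as uniformly bounded (as it is in the application, where $\varsigma_n$ are indicators), and the displayed conclusion must be read as $\liminf_{N\to\infty}\frac1N\sum_{n=1}^N\varsigma_n\ge c$ a.s., since the averages of $\mathbb{E}\varsigma_n$ need not converge; the latter is all that is used in the proof of the profitability theorem via \eqref{cr}. Compared with simply citing \cite{ALZ2001}, your route buys a short, elementary, and slightly more general proof (pairwise independence at large lags suffices), at the cost of a page of bookkeeping.
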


Based on this lemma, the profitability of the IITC and the EIITC can be obtained. We state the following

\begin{theorem}
We assume that  $\{R_n(L)\}$ is finitely dependent sequence of cross rate, then we have following two result 
 
(1) if
\begin{equation}\label{thmcond1}
P_{AA}+P_{BB}\geq\frac{1}{2},
\end{equation}
then two update rules IITC or EIITC with $ CR(MPCR^a, MPO^b,R'^{(k+1)}),$ $ a,b=1,2$, become a profitable strategy when time horizon goes to infinity.
 
(2)    if
\begin{equation} \label{thmcond2}
P_{AB}+P_{BA}\geq\frac{1}{2},
\end{equation}
then two update rules IITC or EIITC with $ CR(MPCR^a, MPO^b,R'^{(k+1)}),$ $ a,b=1,2$, become a profitable strategy when time horizon goes to infinity.
\end{theorem}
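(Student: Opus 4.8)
The plan is to encode, for each fixed choice of update rule and cross--rate scheme, the event that CR($MPCR^a$, $MPO^b$, $R'^{(k+1)}$) is effective on the $n$-th length-$L$ segment $\acute D_n(L)$ by a bounded random variable, to check that the resulting sequence inherits finite dependence from the cross-rate sequence, to bound its mean from below by $\frac12$ using Lemma~\ref{eff}, and finally to invoke Lemma~\ref{lm5.2}. So fix $a,b\in\{1,2\}$ and let $\varsigma_n$ equal $1$ if CR($MPCR^a$, $MPO^b$, $R'^{(k+1)}$) is effective for $\acute D_n(L)$, i.e. if $\theta_n(CR)\ge\frac12$ in \eqref{Futon}, and $\varsigma_n$ equal $0$ otherwise. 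Writing $\mathcal N$ for the number of length-$L$ segments (so $\mathcal N\to\infty$ as the horizon grows), the average $\frac1{\mathcal N}\sum_{n=1}^{\mathcal N}\varsigma_n$ is exactly the quantity $\eta(\acute D(N),MPCR^a,MPO^b,R^{(k+1)})$ of \eqref{cr}; hence it suffices to prove $\liminf_{\mathcal N\to\infty}\frac1{\mathcal N}\sum_{n=1}^{\mathcal N}\varsigma_n\ge\frac12$ almost surely.

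First I would verify the structural hypotheses of Lemma~\ref{lm5.2} for $\{\varsigma_n\}$: each $\varsigma_n$ is $\{0,1\}$-valued, hence bounded, and by inspecting Steps~1--3 of the cross rate method one sees that $\varsigma_n$ is a deterministic function of only finitely many consecutive segment cross rates $W_{n-1}(L),W_n(L),W_{n+1}(L)$ and of the finitely many return matrices inside those segments that feed the $MPO^b$-step prediction. Since the cross-rate sequence $\{W_n(L)\}$ is finitely dependent by assumption, this forces $\varsigma_n$ and $\varsigma_{n+K_1}$ to be independent once $K_1$ exceeds the dependence constant by a fixed bounded amount, so $\{\varsigma_n\}$ is finitely dependent in the sense of Definition~\ref{London}.

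Next I would estimate $\E\varsigma_n=\p(\varsigma_n=1)$ from below using Lemma~\ref{eff}, which guarantees $\varsigma_n=1$ whenever $W'_{n+1}(L)$ and $W_{n+1}(L)$ fall in the same one of the intervals $A=[0,\frac12)$ and $B=[\frac12,1]$. In case~(1) we run $MPCR1$, so $W'_{n+1}(L)=W_n(L)$ and therefore $\{\varsigma_n=1\}\supseteq\{W_n(L)\in A,\,W_{n+1}(L)\in A\}\cup\{W_n(L)\in B,\,W_{n+1}(L)\in B\}$; by stationarity and \eqref{thmcond1} this gives $\E\varsigma_n\ge P_{AA}+P_{BB}\ge\frac12$. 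In case~(2) we run $MPCR2$, so $W'_{n+1}(L)\in A$ precisely when $W_n(L)\in B$ and $W'_{n+1}(L)\in B$ precisely when $W_n(L)\in A$; hence $\{\varsigma_n=1\}\supseteq\{W_n(L)\in B,\,W_{n+1}(L)\in A\}\cup\{W_n(L)\in A,\,W_{n+1}(L)\in B\}$, and by stationarity and \eqref{thmcond2}, $\E\varsigma_n\ge P_{BA}+P_{AB}\ge\frac12$.

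In both cases Lemma~\ref{lm5.2} with $c=\frac12$ then yields $\lim_{\mathcal N\to\infty}\frac1{\mathcal N}\sum_{n=1}^{\mathcal N}\varsigma_n\ge\frac12$ almost surely, i.e. $\eta(\acute D(N),MPCR^a,MPO^b,R^{(k+1)})\ge\frac12$ in the limit, a.s.; coupled with either update rule IITC~\eqref{portf} or EIITC~\eqref{portf1} this is exactly the claimed profitable strategy as the time horizon goes to infinity. I expect the delicate point to be the finite-dependence claim of the second paragraph: one has to pin down precisely which return matrices the $MPO^b$ prediction uses (for $b=2$ it reaches back one step, hence possibly into the adjacent segment) and check that this only enlarges the dependence range by a bounded amount; the mean bounds and the interval bookkeeping are then immediate from Lemma~\ref{eff} and the stationarity assumption.
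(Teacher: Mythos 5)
Your proposal is correct and follows essentially the same route as the paper: define the indicator $\varsigma_n$ of effectiveness on each segment, use Lemma \ref{eff} together with the MPCR1 (resp. MPCR2) prediction to include the event $\{W_n(L),W_{n+1}(L)$ in the same interval$\}$ (resp. in opposite intervals) in $\{\varsigma_n=1\}$, deduce $\E\varsigma_n\ge P_{AA}+P_{BB}\ge\frac12$ (resp. $P_{AB}+P_{BA}\ge\frac12$), and conclude via Lemma \ref{lm5.2}. Your explicit check that $\{\varsigma_n\}$ inherits finite dependence is a point the paper leaves implicit, but it is a refinement of the same argument rather than a different one.
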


\begin{proof}
We start with the proof for the case (1) . Let
\begin{equation}
\varsigma_n=
\begin{cases}
1, & \text{if } \theta_n(MPCR1,MPO2,R'^{(k+1)}) \geq \frac{1}{2} \\
0,  & \text{if } \theta_n(MPCR1,MPO2,R'^{(k+1)}) < \frac{1}{2}
\end{cases}
\end{equation}
for $n=1,2,\cdots$. By the definition of expectation for discrete time random variable, $E\varsigma_n\ge\frac{1}{2}$.
Note that 
\begin{equation*}
\begin{split}
&\{W'_{n+1}(L),W_{n+1}(L)\in A\}\cup \{W'_{n+1}(L),W_{n+1}(L)\in B\}\\
&\subseteq\{CR(MPCR1, MPO1,R'^{(K+1)})  \mbox{ is effective }\}\\
&=\{\varsigma_n=1\}.
\end{split}
\end{equation*}
So, one has 
\begin{equation}\label{zuimei}
\begin{split}
&\mathbb{P}(\{W'_{n+1}(L),W_{n+1}(L)\in A\}\cup \{W'_{n+1}(L),W_{n+1}(L)\in B\})
\le \mathbb{P}(\varsigma_n=1).
\end{split}
\end{equation}
Due to the fact that 
\begin{equation*}
\begin{split}
&\mathbb{P}(\{W'_{n+1}(L),W_{n+1}(L)\in A\}\cup \{W'_{n+1}(L),W_{n+1}(L)\in B\})
=P_{AA}+P_{BB},
\end{split}
\end{equation*}
together with the assumption \eqref{thmcond1}, we deduce from \eqref{zuimei} that 
\begin{equation*}
\mathbb{P}(\varsigma_n=1)\geq P_{AA}+P_{BB}\geq\frac{1}{2}.
\end{equation*}
Next, applying Lemma \ref{lm5.2} yields that 
\begin{equation*}
\lim_{N\rightarrow \infty} \eta (\\\acute{D}(N),MPCR,MPO,R'^{(k+1)})=\lim_{n\rightarrow \infty}\frac{1}{n} \sum_{k=1}^n \varsigma_k\geq \frac{1}{2}, \quad a.s.
\end{equation*}
This completes the proof.

Next, we move forward to complete the proof concerning the case (2).  Set \begin{equation*}
\varsigma_n=
\begin{cases}
1, & \text{if } \theta_n(MPCR2,MPO1,R'^{(k+1)}) \geq \frac{1}{2} \\
0,  & \text{if } \theta_n(MPCR2,MPO1,R'^{(k+1)}) < \frac{1}{2}
\end{cases}
\end{equation*}
for $n=1,2,\cdots$.  It is easy to see that $E\varsigma_n\ge\frac{1}{2}$. Since
\begin{equation*}
\begin{split}
&\{W'_{n+1}(L)\in A,W_{n+1}(L)\in B\}\cup \{W'_{n+1}(L)\in B,W_{n+1}(L)\in A\}\\
&\subseteq\{CR(MPCR2, MPO1,R'^{(K+1)})  \mbox{ is effective }\}\\
&=\{\varsigma_n=1\},
\end{split}
\end{equation*}
we have 
\begin{equation*} 
\begin{split}
&\mathbb{P}(\{W'_{n+1}(L)\in A,W_{n+1}(L)\in B\}\cup \{W'_{n+1}(L)\in B,W_{n+1}(L)\in A\})
\le \mathbb{P}(\varsigma_n=1).
\end{split}
\end{equation*}
This, together with 
\begin{equation*}
\begin{split}
&\mathbb{P}(\{W'_{n+1}(L),R_{n+1}(L)\in A\}\cup \{W'_{n+1}(L),W_{n+1}(L)\in B\})
=P_{AB}+P_{BA},
\end{split}
\end{equation*}
and  \eqref{thmcond2}, leads to 
\begin{equation*}
\mathbb{P}(\varsigma_n=1)\geq P_{AB}+P_{BA}\geq\frac{1}{2}.
\end{equation*}
Thus, the desired assertion follows from Lemma \ref{lm5.2}. 
\end{proof}
 
The key points  for the selections of the MPCR1 and the MPCR2 are based on the theorem above.
 
\begin{remark}
Above theorem shows the general situation for the profitable portfolio selection. In the real world, investors can select  pairs of currencies in the foreign exchange market with one of $P_{AA}+P_{BB},P_{AB}+P_{BA},P_{AA}+P_{BA}$ and $P_{AB}+P_{BB}$ which with the value greater than $\frac{1}{2}$. For instance, if investors select five pairs of currencies by $P_{AA}+P_{BB}=0.78$, then we have a profitable strategy,
\begin{equation}
\lim_{\eta \rightarrow\infty}\pi(\text{\'{D}}(N),MPCR,MPO,R'^{(k+1)})\geq 0.78, \text{ } a.s..\end{equation}
\end{remark}
 
\subsection{Universality of the IITC and the EIITC}
 Motivated by \cite{ALZ2001},  in this section we aim  to show the universality of the on-line portfolio selections (i.e., \eqref{portf} and \eqref{portf1}) in the foreign exchange markets. Clearly, two update rules (i.e.,  IITC and EIITC)  are universality for both active and passive strategies. 

For simplicity, we just take a single pair $(i,j) $ of the currencies involved.  With the help \eqref{b-p},   the exponential growth rate of investment on the currency pair $(i,j) $    is
\begin{equation}\label{ren1}
LI^*_N(e_{ij},\{R^{(k)}\})=\frac{1}{N}\log\bigg(\prod_{k+1}^N e_{ij}\diamond R^{(k)}\bigg)=\frac{1}{N}\sum_{k=1}^{N}\log (e_{ij}\diamond R^{(k)}),
\end{equation}
where $e_{ij} \in \mathbb{R}^m\otimes \mathbb{R}^m$, where  the $ij^{th}$ entry is equal to $1$ and the other entries are equal to zero. Recall from   \eqref{P-B} that the  exponential growth rate $R_N(\{\psi^{(k)}\},\{R^{(k)}\})$ with the transaction costs is defined as  
\begin{equation}\label{Ren}
R_N(\{\psi^{(k)}\},\{R^{(k)}\})\:= \frac{1}{N}\sum_{i=1}^{N}\log \Big(\psi^{(k)}\diamond R^{(k)}\Big)+\frac{1}{N}\sum_{i=1}^{N}\log (1-c_k).\end{equation} 

The following theorem reveals  the gap between $R_N(\{\psi^{(k)}\},\{R^{(k)}\})$ and $LI^*_N(e_{ij},\{R^{(k)}\})$. 

\begin{theorem}\label{R}
Let $R^{(1)}, \cdots, R^{(N)}$ be an arbitrary sequence of return matrices with $\bar r_{ij}^{(k)}+\underline r_{ij}^{(k)}\ge r,$ where $ i,j\in\Lambda_0,k\in\Lambda_1$,  for some constant $r\in(0,1)$  and $ 
\max_{i,j\in\Lambda_0}(\bar r_{ij}^{(k)}+\underline r_{ij}^{(k)})=1 $. Consider the linear prediction ${R'}^{(k+1)}=\sum_{l=1}^{d_k}a_{k,l}R^{(k-l+1)}$, where $a_{k,l}\ge0,~ l=1,\cdots, d_k$ and                   
 $~\sum_{l=1}^{d_k}a_{k,l}=1, d_k\ge1.$  Let $\gamma>0.$ Then,  
 
 \begin{itemize}
\item For the IITC algorithm \eqref{portf},
\begin{equation} \label{ren9}
\begin{split}
&R_N(\{\psi^{(k)}\},\{R^{(k)}\})-R_N^*(\{e_{ij}\},\{R^{(k)}\})\\
&\ge\frac{1}{N}\log\frac{\psi^{(1)}_{ij}}{\psi^{(N+1)}_{ij}}+\frac{1}{N}\sum_{k=1}^{N}\log (1-c_k)+\gamma r-\gamma
.
\end{split}
\end{equation}
\item For  the EIITC algorithm \eqref{portf1},
\begin{equation} \label{ren10}
\begin{split}
&R_N(\{\psi^{(k)}\},\{R^{(k)}\})-R_N^*(\{e_{ij}\},\{R^{(k)}\})\\
&\ge\frac{1}{N}\log\frac{\psi^{(1)}_{ij}}{\psi^{(N+1)}_{ij}}+\frac{1}{N}\sum_{k=1}^{N}\log (1-c_k)+\gamma r-\frac{\gamma}{r}
.
\end{split}
\end{equation}
\item For the IITC algorithm \eqref{portf},
\begin{equation}\label{PP}
c_{k+1}(\gamma)\leqslant \frac{c\e^\gamma}\gamma{1-c}+O(\gamma^2),
\end{equation}
where $O(\gamma^2)$ means that there exist  $\alpha_1,\alpha_2\in R$  such that $\alpha_1\gamma^2\le O(\gamma^2)\le \alpha_2\gamma^2$ as $\gamma>0$ is sufficiently small.
\item For  the EIITC algorithm \eqref{portf1}, 
\begin{equation}\label{PPP}
c_{k+1}(\gamma)\leqslant \frac{c\e^{\gamma/r}\gamma}{(1-c)r}+O(\gamma^2).
\end{equation}
\end{itemize}

\end{theorem}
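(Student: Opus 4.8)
The plan is to derive all four estimates from one exponentiated-gradient telescoping argument, treating the EIITC rule \eqref{portf1} as the computation for the IITC rule \eqref{portf} with the constant rate $\gamma$ replaced by the effective rate $\tilde\gamma_k:=\gamma/(\psi'^{(k)}\diamond R'^{(k+1)})$. Throughout, abbreviate the realised and predicted pair returns by $q^{(k)}_{ij}:=\bar r_{ij}^{(k)}+\underline r_{ij}^{(k)}=e_{ij}\diamond R^{(k)}$ and $q'^{(k+1)}_{ij}:=\bar r_{ij}^{'(k+1)}+\underline r_{ij}^{'(k+1)}$. Since each prediction is a convex combination $q'^{(k+1)}_{ij}=\sum_{l=1}^{d_k}a_{k,l}\,q^{(k-l+1)}_{ij}$ of past realised returns with $r\le q^{(k)}_{ij}\le 1$, one has $q'^{(k+1)}_{ij}\in[r,1]$ for all $i,j,k$. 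I keep in force the standing hypotheses $c<1$, $F_k>0$, $c_k<1$, $\psi^{(k)}\diamond R^{(k)}>0$ and $\psi^{(1)}_{ij}>0$; by \eqref{portf}/\eqref{portf1} together with $q^{(\cdot)}_{ij}\ge r>0$, the last one propagates to $\psi^{(k)}_{ij}>0$ for every $k$, so all logarithms below are finite.

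First I would reduce the left-hand side of \eqref{ren9}--\eqref{ren10} to a telescoping sum. By \eqref{Fishman}, $F'_k/F_k=1/(\psi^{(k)}\diamond R^{(k)})$, so \eqref{p1} reads $\psi'^{(k)}_{ij}=\psi^{(k)}_{ij}q^{(k)}_{ij}/(\psi^{(k)}\diamond R^{(k)})$, i.e.\ $\log(\psi^{(k)}_{ij}/\psi'^{(k)}_{ij})=\log(\psi^{(k)}\diamond R^{(k)})-\log(e_{ij}\diamond R^{(k)})$. The comparator $e_{ij}$ is left unchanged by the automatic adjustment (one checks $\psi'^{(k)}=e_{ij}$ at every step), so it requires no rebalancing and $R^*_N(\{e_{ij}\},\{R^{(k)}\})=\frac1N\sum_{k=1}^N\log(e_{ij}\diamond R^{(k)})$; subtracting this from \eqref{Ren} gives
\[
R_N(\{\psi^{(k)}\},\{R^{(k)}\})-R^*_N(\{e_{ij}\},\{R^{(k)}\})=\frac1N\sum_{k=1}^N\log\frac{\psi^{(k)}_{ij}}{\psi'^{(k)}_{ij}}+\frac1N\sum_{k=1}^N\log(1-c_k).
\]
Solving \eqref{portf} for $\psi'^{(k)}_{ij}$ yields $\log\psi'^{(k)}_{ij}=\log\psi^{(k+1)}_{ij}+\log Z_k-\gamma q'^{(k+1)}_{ij}$ with normaliser $Z_k:=\sum_{v,l}\psi'^{(k)}_{vl}\exp(\gamma q'^{(k+1)}_{vl})$, so $\log(\psi^{(k)}_{ij}/\psi'^{(k)}_{ij})=\log\psi^{(k)}_{ij}-\log\psi^{(k+1)}_{ij}-\log Z_k+\gamma q'^{(k+1)}_{ij}$; summing over $k=1,\dots,N$ the first two terms telescope to $\log(\psi^{(1)}_{ij}/\psi^{(N+1)}_{ij})$. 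Since $q'^{(k+1)}_{vl}\le 1$ and $\sum_{v,l}\psi'^{(k)}_{vl}=1$ we get $Z_k\le e^{\gamma}$, hence $-\log Z_k\ge-\gamma$, while $q'^{(k+1)}_{ij}\ge r$; inserting these into the displayed identity gives \eqref{ren9}. For \eqref{ren10} one repeats the computation with $\gamma$ replaced by $\tilde\gamma_k$: because $\psi'^{(k)}\diamond R'^{(k+1)}=\sum_{v,l}\psi'^{(k)}_{vl}q'^{(k+1)}_{vl}\in[r,1]$ we have $\gamma\le\tilde\gamma_k\le\gamma/r$, so the corresponding normaliser obeys $\log\tilde Z_k\le\tilde\gamma_k\le\gamma/r$ and $\tilde\gamma_k q'^{(k+1)}_{ij}\ge\gamma r$, which produces \eqref{ren10}.

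For the transaction-cost bounds \eqref{PP}--\eqref{PPP} I would start from \eqref{ineq2} (legitimate as $c<1$): $c_{k+1}(\gamma)=T_{k+1}/F_k\le\frac{c}{1-c}\,d(\psi^{(k+1)},\psi'^{(k)})$. Writing \eqref{portf} multiplicatively, $\psi^{(k+1)}_{ij}-\psi'^{(k)}_{ij}=\frac{\psi'^{(k)}_{ij}}{Z_k}\big(e^{\gamma q'^{(k+1)}_{ij}}-Z_k\big)$; both $e^{\gamma q'^{(k+1)}_{ij}}$ and $Z_k$ lie in $[1,e^{\gamma}]$, so summing over $i,j$ and using $\sum_{i,j}\psi'^{(k)}_{ij}=1$, $Z_k\ge 1$ gives $d(\psi^{(k+1)},\psi'^{(k)})\le e^{\gamma}-1\le\gamma e^{\gamma}$, the last step being the elementary bound $e^{-\gamma}\ge 1-\gamma$. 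Hence $c_{k+1}(\gamma)\le\frac{c\gamma e^{\gamma}}{1-c}$, which is \eqref{PP} (any sharper constant being absorbed into the $O(\gamma^2)$ slack); the identical computation with $\gamma\le\tilde\gamma_k\le\gamma/r$ gives $d(\psi^{(k+1)},\psi'^{(k)})\le e^{\gamma/r}-1\le\frac{\gamma}{r}e^{\gamma/r}$, i.e.\ \eqref{PPP}. The one genuinely delicate point is the reduction in the middle paragraph: one must verify that the comparator $e_{ij}$ pays no transaction cost, so that $R^*_N$ collapses to the plain log-growth rate, and observe that the exponents in \eqref{portf}/\eqref{portf1} carry the \emph{predicted} returns $q'^{(k+1)}$; the term $\gamma\sum_k q'^{(k+1)}_{ij}$ is not cancelled against anything but merely bounded below by $\gamma Nr$, and it is exactly this crude step that makes the resulting estimate universal, that is, independent of how accurate the linear prediction happens to be.
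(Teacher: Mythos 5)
Your proof is correct and takes essentially the same route as the paper: for \eqref{ren9}--\eqref{ren10} you use the identical telescoping of $\log\psi^{(k)}_{ij}-\log\psi^{(k+1)}_{ij}$ combined with \eqref{p1} and the bounds $r\le \bar r_{ij}^{'(k+1)}+\underline r_{ij}^{'(k+1)}\le 1$ on the normaliser, and for \eqref{PP}--\eqref{PPP} you start, as the paper does, from \eqref{ineq2} and estimate $d(\psi^{(k+1)},\psi'^{(k)})$ through the multiplicative update. The only minor deviations are that you handle EIITC explicitly as IITC with effective rate $\gamma/(\psi'^{(k)}\diamond R'^{(k+1)})\in[\gamma,\gamma/r]$ (where the paper merely says "similar manner") and you bound $d(\psi^{(k+1)},\psi'^{(k)})\le \e^{\gamma}-1\le\gamma \e^{\gamma}$ directly, instead of the paper's slightly sharper chain ending in $\frac{c\e^{\gamma(1-r)}}{1-c}\bigl(1-\e^{-\gamma(1-r)}\bigr)$ followed by a Taylor expansion; both estimates imply the stated \eqref{PP} and \eqref{PPP}.
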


\begin{proof}
We only focus on the proof of \eqref{ren9} since \eqref{ren10}  can be done in a similar manner. 
It is easy to see that 
\begin{equation*}
\log\frac{\psi^{(1)}_{ij}}{\psi^{(N+1)}_{ij}}=\sum_{k=1}^N\Big\{\log\Big(\psi^{(k)}_{ij}\Big)-\log\Big(\psi^{(k+1)}_{ij}\Big)\Big\}.
\end{equation*}
This, together with \eqref{portf}, yields that 
\begin{equation}\label{ren4}
\log\frac{\psi^{(1)}_{ij}}{\psi^{(N+1)}_{ij}}=\sum_{k=1}^N\bigg\{\log\Big(\psi^{(k)}_{ij}\Big)-\log\bigg(\frac{\psi'^{(k)}_{ij}\exp(\gamma (\bar r_{ij}^{'(k+1)}+\underline r_{ij}^{'(k+1)}))}{\sum_{v,l=1}^{m}\psi'^{(k)}_{vl}\exp(\gamma (\bar r_{vl}^{'(k+1)}+\underline r_{vl}^{'(k+1)}))}\bigg)\bigg\}.
\end{equation}
Due to  \eqref{p1}, we have 
$$\psi^{(k)}_{ij}=\frac{{\psi'}^{(k)}_{ij}(\psi^{(k)}\diamond R^{(k)})}{\bar r^{(k)}_{ij}+\underline r^{(k)}_{ij}}.$$
Substituting this into \eqref{ren4} gives that 
\begin{equation} \label{ren5}
\begin{split}
\log\frac{\psi^{(1)}_{ij}}{\psi^{(N+1)}_{ij}}&=\sum_{k=1}^N\bigg\{\log\bigg(\frac{{\psi'}^{(k)}_{ij}(\psi^{(k)}\diamond R^{(k)})}{\bar r^{(k)}_{ij}+\underline r^{(k)}_{ij}}\bigg)\\
&\quad-\log\bigg(\frac{\psi'^{(k)}_{ij}\exp(\gamma (\bar r_{ij}^{'(k+1)}+\underline r_{ij}^{'(k+1)}))}{\sum_{v,l=1}^{m}\psi'^{(k)}_{vl}\exp(\gamma (\bar r_{vl}^{'(k+1)}+\underline r_{vl}^{'(k+1)}))}\bigg)\bigg\}\\
&=\sum_{k=1}^N\bigg\{\log\bigg({\psi'}^{(k)}_{ij}\bigg)+\log\bigg(\psi^{(k)}\diamond R^{(k)})\bigg)-\log\bigg(\bar r^{(k)}_{ij}+\underline r^{(k)}_{ij}\bigg)\\
&\quad-\log\bigg({\psi'}^{(k)}_{ij}\bigg)-\gamma (\bar r_{ij}^{'(k+1)}+\underline r_{ij}^{'(k+1)})\\
&\quad+\log\bigg(\sum_{v,l=1}^{m}\psi'^{(k)}_{vl}\exp(\gamma (\bar r_{vl}^{'(k+1)}+\underline r_{vl}^{'(k+1)}))\bigg)\bigg\}\\
&=\sum_{k=1}^N\bigg\{\log\bigg(\psi^{(k)}\diamond R^{(k)})\bigg)-\log\bigg(\bar r^{(k)}_{ij}+\underline r^{(k)}_{ij}\bigg)\\
&\quad-\gamma (\bar r_{ij}^{'(k+1)}+\underline r_{ij}^{'(k+1)})\\
&\quad+\log\bigg(\sum_{v,l=1}^{m}\psi'^{(k)}_{vl}\exp(\gamma (\bar r_{vl}^{'(k+1)}+\underline r_{vl}^{'(k+1)}))\bigg)\bigg\}.
\end{split}
\end{equation}
On the other hand, 
by taking  \eqref{ren1}  and \eqref{Ren}into account, it follows from \eqref{ren5} that 
\begin{equation}\label{ren8}
\begin{split}
&R_N(\{\psi^{(k)}\},\{R^{(k)}\})-R_N^*(\{e_{ij}\},\{R^{(k)}\})\\
&=\frac{1}{N}\sum_{k=1}^{N}\Big\{\log \bigg( \psi^{(k)}\diamond R^{(k)}\bigg)+\log (1-c_k)-\log \bigg(e_{ij}\diamond R^{(k)}\bigg)\Big\}\\
&=\frac{1}{N}\log\frac{\psi^{(1)}_{ij}}{\psi^{(N+1)}_{ij}}+\frac{1}{N}\sum_{k=1}^{N}\log (1-c_k)\\
&\quad+\frac{1}{N}\sum_{k=1}^{N}\bigg(\gamma(\bar r_{ij}^{'(k+1)}+\underline r_{ij}^{'(k+1)})-\log\bigg(\sum_{v,l=1}^{m}\psi'^{(k)}_{vl}\exp(\gamma (\bar r_{vl}^{'(k+1)}+\underline r_{vl}^{'(k+1)}))\bigg)\bigg).
\end{split}
\end{equation}
Note that 
\begin{equation*}
\bar r_{ij}^{'(k+1)}+\underline r_{ij}^{'(k+1)})=\sum_{l=1}^{d_k}a_{k,l}(\bar r_{ij}^{(k-l+1)}+\underline r_{ij}^{(k-l+1)})
\end{equation*}
This, together with $\max_{i,j}(\bar r_{ij}^{(k)}+\underline r_{ij}^{(k)})=1$ and $\bar r_{ij}^{(k)}+\underline r_{ij}^{(k)}\ge r$, leads to 
\begin{equation}\label{Budapest}
r\le \bar r_{ij}^{'(k+1)}+\underline r_{ij}^{'(k+1)})\le 1.
\end{equation}
Therefore, we conclude that 
\begin{equation}\label{ren6}
\frac{1}{N}\sum_{k=1}^{N}\bigg(\gamma(\bar r_{ij}^{'(k+1)}+\underline r_{ij}^{'(k+1)})\bigg)\ge\gamma r
\end{equation}
and that 
\begin{equation}\label{ren7}
\log\bigg(\sum_{v,l=1}^{m}\psi'^{(k)}_{vl}\exp(\gamma (\bar r_{vl}^{'(k+1)}+\underline r_{vl}^{'(k+1)}))\bigg)\bigg)\le\log\bigg(\sum_{v,l=1}^{m}\psi'^{(k)}_{vl}\exp(\gamma )\bigg)\bigg)=\gamma,
\end{equation}
where in the last display we have used the fact $\sum_{v,l=1}^{m}\psi'^{(k)}_{vl}=1$. Inserting \eqref{ren6} and \eqref{ren7} back into \eqref{ren8} implies that 
\begin{equation} 
\begin{split}
&R_N(\{\psi^{(k)}\},\{R^{(k)}\})-R_N^*(\{e_{ij}\},\{R^{(k)}\})\\
&\ge\frac{1}{N}\log\frac{\psi^{(1)}_{ij}}{\psi^{(N+1)}_{ij}}+\frac{1}{N}\sum_{k=1}^{N}\log (1-c_k)+\gamma r-\gamma
.
\end{split}
\end{equation}
So the desired assertion \eqref{ren9} follows immediately.

In the sequel, we work only on \eqref{PP} since \eqref{PPP} can be coped with in a parallel way. By \eqref{Hefei}, in addition to \eqref{ineq2}, it follows that 
\begin{equation*}
\begin{split}
c_{k+1}(\gamma)&=\frac{T_{k+1}}{F_k}\\
&\le \frac{c}{1-c}d(\psi^{(k+1)}, {\psi'}^{(k)})\\
&= \frac{c}{1-c}\sum_{ij=1}^{m}|\psi^{(k+1)}_{ij}-\psi'^{(k)}_{ij}|.
\end{split}
\end{equation*}
This, together with \eqref{portf},  implies that 
\begin{equation*}
\begin{split}
c_{k+1}(\gamma)&\le\frac{c}{1-c}\sum_{ij=1}^{m}\bigg|\frac{\psi'^{(k)}_{ij}\exp(\gamma (\bar r_{ij}^{'(k+1)}+\underline r_{ij}^{'(k+1)}))}{\sum_{v,l=1}^{m}\psi'^{(k)}_{vl}\exp(\gamma (\bar r_{vl}^{'(k+1)}+\underline r_{vl}^{'(k+1)}))}-\psi'^{(k)}_{ij}\bigg|\\
&=\frac{c}{1-c}\sum_{ij=1}^{m}\psi'^{(k)}_{ij}\bigg|\frac{\exp(\gamma (\bar r_{ij}^{'(k+1)}+\underline r_{ij}^{'(k+1)}))}{\sum_{v,l=1}^{m}\psi'^{(k)}_{vl}\exp(\gamma (\bar r_{vl}^{'(k+1)}+\underline r_{vl}^{'(k+1)}))}-1\bigg|\\
&=\frac{c}{1-c}\sum_{ij=1}^{m}\psi'^{(k)}_{ij}\bigg|\frac{\exp(\gamma (\bar r_{ij}^{'(k+1)}+\underline r_{ij}^{'(k+1)}))-\sum_{v,l=1}^{m}\psi'^{(k)}_{vl}\exp(\gamma (\bar r_{vl}^{'(k+1)}+\underline r_{vl}^{'(k+1)}))}{\sum_{v,l=1}^{m}\psi'^{(k)}_{vl}\exp(\gamma (\bar r_{vl}^{'(k+1)}+\underline r_{vl}^{'(k+1)}))}\bigg|\\
&=\frac{c}{1-c}\sum_{ij=1}^{m}\psi'^{(k)}_{ij}\bigg|\frac{\sum_{v,l=1}^{m}\psi'^{(k)}_{vl}  \exp(\gamma (\bar r_{ij}^{'(k+1)}+\underline r_{ij}^{'(k+1)}))-\sum_{v,l=1}^{m}\psi'^{(k)}_{vl}\exp(\gamma (\bar r_{vl}^{'(k+1)}+\underline r_{vl}^{'(k+1)}))}{\sum_{v,l=1}^{m}\psi'^{(k)}_{vl}\exp(\gamma (\bar r_{vl}^{'(k+1)}+\underline r_{vl}^{'(k+1)}))}\bigg|,
\end{split}
\end{equation*}
where in the last step we have used $\sum_{v,l=1}^{m}\psi'^{(k)}_{vl}=1$. Thus, we deduce from \eqref{Budapest} and $\sum_{v,l=1}^{m}\psi'^{(k)}_{vl}=1$ that 
 \begin{equation}\label{Way}
\begin{split}
c_{k+1}(\gamma)&\le\frac{c}{1-c}\sum_{ij=1}^{m}\psi'^{(k)}_{ij}\sum_{v,l=1}^{m}\psi'^{(k)}_{vl} \bigg|\frac{ \exp(\gamma (\bar r_{ij}^{'(k+1)}+\underline r_{ij}^{'(k+1)}))-\exp(\gamma (\bar r_{vl}^{'(k+1)}+\underline r_{vl}^{'(k+1)}))}{\sum_{v,l=1}^{m}\psi'^{(k)}_{vl}\exp(\gamma (\bar r_{vl}^{'(k+1)}+\underline r_{vl}^{'(k+1)}))}\bigg|\\
&\le \frac{c\e^{-\gamma r}}{1-c}\sum_{ij=1}^{m}\psi'^{(k)}_{ij}\sum_{v,l=1}^{m}\psi'^{(k)}_{vl} \bigg| \exp(\gamma (\bar r_{ij}^{'(k+1)}+\underline r_{ij}^{'(k+1)}))-\exp(\gamma (\bar r_{vl}^{'(k+1)}+\underline r_{vl}^{'(k+1)}))\bigg|\\
&\le \frac{c\e^{-\gamma r}}{1-c}\sum_{ij=1}^{m}\psi'^{(k)}_{ij}\sum_{v,l=1}^{m}\psi'^{(k)}_{vl} (\e^\gamma-\e^{\gamma r})\\
&=\frac{c\e^{-\gamma r}}{1-c} (\e^\gamma-\e^{\gamma r})\\
&=\frac{c\e^{\gamma (1-r)}}{1-c} (1-\e^{-\gamma(1-r)}).
\end{split}
\end{equation}
By the Taylor expansion, one has 
\begin{equation*}
\e^{-\gamma(1-r)}=1-\gamma(1-r)+O(\gamma^2).
\end{equation*}
Putting this into \eqref{Way} yields the desired assertion \eqref{PP}.
\end{proof}

The optimal currency  in the single trading action portfolio will be taken according to the IITC or the EIITC  update rule as the following strategy demonstrate.  More precisely, let us provide a partition of the set $\{1,...,N\}$ in the following manner:  for some integer $l\in(1,N)$, 
\begin{equation}
\Gamma_i=\begin{cases}
\{\frac{i(i-1)l}{2}+1,...,\frac{i(i+1)l}{2}\},\text i=1,2,...,n_l-1,\\
\{\frac{n_l(n_l-1)l}{2}+1,...,N\},\,\,\,\,\text  i=n_l,
\end{cases}
\end{equation}
in which $n_l:=\Big[\frac{\sqrt{1+8N/l}-1}{2}\Big]$ with $[x], x>0,$ being the smallest integer which is large or equal to $x$. It is readily to see that, for each $\Gamma_i, i<n_l,$  its  length  is equal to $il.$
 
In the sequel, we intend to show that, from a long-term point view of investment, the exponential growth rate of funds with decrements in terms of the IITC or the EIITC algorithm is bigger  than the one achieved by the single best currency. Let $\{R^{(i)}\}_{1\le i\le N}$ and $\{\psi^{(i)}\}_{1\leq i \leq N}$ be the sequences of return matrix and the portfolio matrix, respectively. Assume that  $\psi^{(i(i-1)l/2+l)},i=1,...,n,$ is bounded below by a small constant  $\varepsilon >0 $, and that  $\lambda(\cdot):i\mapsto\lambda(i)$ goes to zero as $i$ tends to infinity.  
 According to the partition of the set $\{1,\cdots, N\}$, it is easy to see that 
\begin{equation}\label{ren0}
\begin{split}
\sum_{k=1}^{N}\log (1-c_k)&=\log (1-c_1)+\log (1-c_2)+\cdots+\log (1-c_l)\\
&\quad+\log (1-c_{l+1})+\log (1-c_2)+\cdots+\log (1-c_{3l})\\
&\quad+\cdots\\
&\quad+\log (1-c_{\frac{(n_l-1)(n_l-2)}{2}+1})+\log (1-c_2)+\cdots+\log (1-c_{\frac{n_l(n_l-1)l}{2}})\\
&\quad+\log (1-c_{\frac{n_l(n_l-1)l}{2}}+1)+\cdots+\log (1-c_N)\\
&=\sum_{i=1}^{n_l-1}\sum_{k=1}^{il}\log(1-c_{(i(i-1)l)/2+k})+\sum_{k=1}^{L_{n_l}}\log(1-c_{(n_l(n_l-1)l)/2+k}).
\end{split}
\end{equation}
Thus, we deduce from \eqref{ren9} and \eqref{ren0} that 
\begin{equation} \label{re}
\begin{split}
&\liminf_{N\rightarrow\infty} R_N(\{\psi^{(k)}\},\{R^{(k)}\})-R_N^*(\{e_{ij}\},\{R^{(k)}\})\\
&\ge\liminf_{N\rightarrow\infty} \frac{1}{N}\Big(\sum_{i=1}^{n_l}\Big(\log\varepsilon+il\gamma(i)r-il\gamma(i)\\
&\quad+\sum_{i=1}^{n_l-1}\sum_{k=1}^{il}\log(1-c_{(i(i-1)l)/2+k})+\sum_{k=1}^{L_{n_l}}\log(1-c_{(n_l(n_l-1)l)/2+k})\Big) \\
&=0,
\end{split}
\end{equation}
where in the last procedure we have also used Theorem \ref{R} and the fact that \begin{equation*}\lim_{N\rightarrow \infty}\frac{n_l}{N}=\lim_{N\rightarrow \infty}\frac{L_{n_l}}{N}=0.\end{equation*}
  
From \eqref{re}, we can derive the following corollary, which state that 
\begin{cor}
From a long-term point view of investment, the exponential growth rate of funds with transaction cost in terms of the IITC or the EIITC algorithm is optimal than  the one achieved by the single 
best currency.
\end{cor}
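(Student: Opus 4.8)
The plan is to derive the corollary as an almost immediate consequence of the estimate \eqref{re}, the only further ingredient being the finiteness of the index set $\Lambda_0\times\Lambda_0$. First I would fix a currency pair $(i,j)$ and run the IITC (resp.\ EIITC) update block by block along the partition $\{\Gamma_i\}$, restarting at the beginning of each block $\Gamma_i$ with parameter $\gamma$ equal to $\gamma(i)\downarrow0$. Summing the block bounds \eqref{ren9} (resp.\ \eqref{ren10}) of Theorem~\ref{R} over the $n_l$ blocks, and using $\psi^{(i(i-1)l/2+l)}_{ij}\ge\varepsilon$ to control the telescoping boundary terms, produces exactly the chain \eqref{ren0}--\eqref{re}, so that for every fixed $(i,j)$
\begin{equation*}
\liminf_{N\to\infty}\Big(R_N(\{\psi^{(k)}\},\{R^{(k)}\})-R_N^*(\{e_{ij}\},\{R^{(k)}\})\Big)\ge 0,
\end{equation*}
the right-hand side of \eqref{re} being $0$ regardless of $(i,j)$.

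The step that needs real care is checking that each error term on the right of \eqref{re} is $o(1)$ after dividing by $N$. I would verify: (i) the boundary term contributes $\frac{n_l}{N}\log\varepsilon\to0$, since $n_l\sim\sqrt{2N/l}$; (ii) the mismatch term $\frac1N\sum_{i=1}^{n_l} il\,\gamma(i)(r-1)$ (resp.\ $\frac1N\sum_i il\,(\gamma(i)r-\gamma(i)/r)$ for EIITC) tends to $0$ by a Toeplitz/Ces\`aro argument, because the weights $il$ have total mass comparable to $N$ while $\gamma(i)\to0$; and (iii) the transaction-cost sums $\frac1N\sum_k\log(1-c_k)$ vanish, because \eqref{PP} and \eqref{PPP} give $c_k\le\tfrac{c\,\e^{\gamma(i)}\gamma(i)}{1-c}+O(\gamma(i)^2)$ on $\Gamma_i$, hence $-\log(1-c_k)\le C\gamma(i)$, and the weighted average of $\gamma(i)$ against the block lengths is again $o(1)$. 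I expect this to be the main obstacle: one must exhibit a single admissible decay schedule $\gamma(i)\to0$ for which all three families of error terms are simultaneously negligible, and the interplay of $n_l$, the block lengths $il$, and the rate of decay of $\gamma(\cdot)$ has to be handled with some care.

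Finally I would pass from the fixed-pair statement to the single-best-currency benchmark using finiteness of $\Lambda_0\times\Lambda_0$. Put $\epsilon_N^{(i,j)}:=\big(R_N^*(\{e_{ij}\},\{R^{(k)}\})-R_N(\{\psi^{(k)}\},\{R^{(k)}\})\big)^{+}$; the previous step gives $\epsilon_N^{(i,j)}\to0$ for each $(i,j)$, hence $\epsilon_N:=\max_{i,j\in\Lambda_0}\epsilon_N^{(i,j)}\to0$ as well, and for every $(i,j)$ one has $R_N(\{\psi^{(k)}\},\{R^{(k)}\})\ge R_N^*(\{e_{ij}\},\{R^{(k)}\})-\epsilon_N$, whence
\begin{equation*}
\liminf_{N\to\infty}\Big(R_N(\{\psi^{(k)}\},\{R^{(k)}\})-\max_{i,j\in\Lambda_0}R_N^*(\{e_{ij}\},\{R^{(k)}\})\Big)\ge 0.
\end{equation*}
Since $\max_{i,j\in\Lambda_0}R_N^*(\{e_{ij}\},\{R^{(k)}\})$ is precisely the exponential growth rate delivered by the single best currency pair over the horizon, this is the asserted optimality of the IITC and the EIITC algorithms, and the corollary follows; the max-over-pairs passage is routine once finiteness is invoked.
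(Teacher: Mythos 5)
Your argument is essentially the paper's own proof: it applies Theorem \ref{R} blockwise along the partition $\{\Gamma_i\}$ with a vanishing schedule $\gamma(i)\downarrow 0$, bounds the telescoping boundary terms via the lower bound $\varepsilon$ on the relevant portfolio entries, and shows the $\gamma$-mismatch and transaction-cost contributions are negligible after dividing by $N$, which is exactly the chain \eqref{ren0}--\eqref{re}. The additional details you supply (the Ces\`aro/Toeplitz step for the weighted $\gamma(i)$ sums, invoking \eqref{PP}--\eqref{PPP} to kill the $\log(1-c_k)$ terms, and the final maximum over the finitely many currency pairs) merely make explicit what the paper leaves implicit, so the approach is the same and correct.
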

\begin{remark}
In the case $\gamma =0$, there will be no action with the investors or the confidence of prediction is much  lower. Base on this,  the investor implements the buy-and-hold passive strategy.  
On the other hand, concerning the IITC or the EIITC algorithms, by virtue of \eqref{ren9} and \eqref{ren10},  we infer that the exponential growth rate of funds follow lower bounds. Most importantly, 
these algorithms show that investors will gain more whenever $\gamma \neq0$ with contrast to the case  $\gamma =0$.
\end{remark}

\section{Conclusions and further work}
We introduce a matrix-valued time series model for foreign exchange market according to the realistic market mechanism. Our construction captures the feature of the real foreign currency exchange 
markets in which the return matrix plays a key role. We are then able to define an order for the return matrix by looking at the unique maximum value, if it exists, either in the upper triangular part or 
lower triangular part of the return matrix. From this breakthrough point, we develop a cross rate method to establish an on-line portfolio selection scheme. Mathematically, we justify the profitability 
and the universality of constructed algorithm. 

In our paper, to define the order for two return matrices, we have eliminated the situation that there are more than one maximum value appeared either in the upper triangular part or the lower triangular 
part of the two return matrices, or even the more complex situation that the maximum value appeared in the both upper and lower triangular parts. This is more probably but remains a challenge 
mathematically. We would like also to mention that we have not yet to test our scheme developed in this paper with existing data from the currency exchange markets. We plan to consider these in 
our future work.

\end{document}